\newcommand{\argmin}{\operatornamewithlimits{argmin}}
\newcommand{\argmax}{\operatornamewithlimits{argmax}}
\newtheorem{theorem}{Theorem}
\newtheorem{proposition}{Proposition}
\newtheorem{lemma}{Lemma}
\theoremstyle{definition}
\newtheorem{definition}{Definition}
\def\BibTeX{{\rm B\kern-.05em{\sc i\kern-.025em b}\kern-.08em
    T\kern-.1667em\lower.7ex\hbox{E}\kern-.125emX}}
\begin{document}

\title{An Online Orchestration Mechanism for General-Purpose Edge Computing}

\author{\IEEEauthorblockN{
    Xun Shao ~\IEEEmembership{Member, ~IEEE},
    Go Hasegawa ~\IEEEmembership{Member, ~IEEE},
    Mianxiong Dong ~\IEEEmembership{Senior Member, ~IEEE},
    Zhi Liu ~\IEEEmembership{Senior Member, ~IEEE}, 
    Hiroshi Masui ~\IEEEmembership{NonMember, ~IEEE}, and 
    Yusheng Ji ~\IEEEmembership{Fellow, ~IEEE}
}

\IEEEcompsocitemizethanks{
\IEEEcompsocthanksitem 
Xun Shao is with Kitami Institute of Technology, Kitami, Japan. E-mail: x-shao@ieee.org
\IEEEcompsocthanksitem 
Go Hasegawa is with Tohoku University, Japan. E-mail: hasegawa@riec.tohoku.ac.jp
\IEEEcompsocthanksitem
Mianxiong Dong is with Muroran Institute of Technology, Japan. E-mail: mx.dong@csse.muroran-it.ac.jp. Correspondence
\IEEEcompsocthanksitem 
Zhi Liu is with Shizuoka University, Japan. E-mail: liu@ieee.org. Co-Correspondence.
\IEEEcompsocthanksitem 
Hiroshi Masui is with Kitami Institute of Technology, Kitami, Japan. E-mail: hgmasui@mail.kitami-it.ac.jp
\IEEEcompsocthanksitem 
Yusheng Ji is with National Institute of Informatics, Japan. E-mail: kei@nii.ac.jp}
\thanks{Manuscript received April 19, 2005; revised August 26, 2015.}}

\maketitle

\begin{abstract}
In recent years, the fast development of mobile communications and cloud systems has substantially promoted edge computing. By pushing server resources to the edge, mobile service providers can deliver their content and services with enhanced performance, and mobile-network carriers can alleviate congestion in the core networks. Although edge computing has been attracting much interest, most current research is application-specific, and analysis is lacking from a business perspective of edge cloud providers (ECPs) that provide general-purpose edge cloud services to mobile service providers and users. In this article, we present a vision of general-purpose edge computing realized by multiple interconnected edge clouds, analyzing the business model from the viewpoint of ECPs and identifying the main issues to address to maximize benefits for ECPs. Specifically, we formalize the long-term revenue of ECPs as a function of server-resource allocation and public data-placement decisions subject to the amount of physical resources and inter-cloud data-transportation cost constraints. To optimize the long-term objective, we propose an online framework that integrates the drift-plus-penalty and primal-dual methods. With theoretical analysis and simulations, we show that the proposed method approximates the optimal solution in a challenging environment without having future knowledge of the system.
\end{abstract}

\begin{IEEEkeywords}
General-Purpose Edge Computing, Online Mechanism, Drift-Plus-Penalty Optimization, Primal-Dual Optimization
\end{IEEEkeywords}

\section{Introduction}
\subsection{Motivation}
In recent years, the fast development of mobile devices, communications, and cloud computing has produced a surge of Internet-of-things 
services and applications. Because most mobile services and applications adopt the device-cloud architecture, mobile traffic across the Internet is dramatically increasing, becoming a challenging issue for mobile-network carriers. Due to the great distances between mobile devices and cloud datacenters, however, mobile users suffer from large latency and do not receive satisfactory service. To solve these issues, deploying server resources at the edge of the Internet and providing mobile services with edge servers is a promising method. With edge computing, mobile content, service providers, and mobile devices can take advantage of approximate computing and storage resources and alleviate congestion in the mobile core network. 

One of the most interesting uses of edge computing is data caching. An ECP can cache popular data in edge clouds when the data go across their networks, and then it can serve data requests with the cached data rather than forward the requests to the Internet. Task offloading is also an extensively researched area with edge clouds. Mobile devices can offload computation-intensive tasks, such as video analysis, to edge servers to save battery and accelerate processing. Recently, more advanced, complicated edge services, such as edge computing-assisted online gaming, virtual reality, and augmented reality \cite{Wei2016} \cite{Wei2018} \cite{Wei20182}, have been studied and developed, showing the great potential of edge computing to develop emerging applications. Currently, however, most edge-computing research is service- and application-specific, and there is little discussion of general-purpose edge computing. 

\subsection{Our vision}
In this work, rather than an application-specific edge-computing service, we consider edge-computing systems for general purposes. We analyze the business model, identify the main issues, and provide a solution. We illustrate the general-purpose edge computing in Fig. \ref{fig_edge}. 
\begin{figure*}[htbp]
\centerline{\includegraphics[scale=0.35,bb=0 0 921 477]{./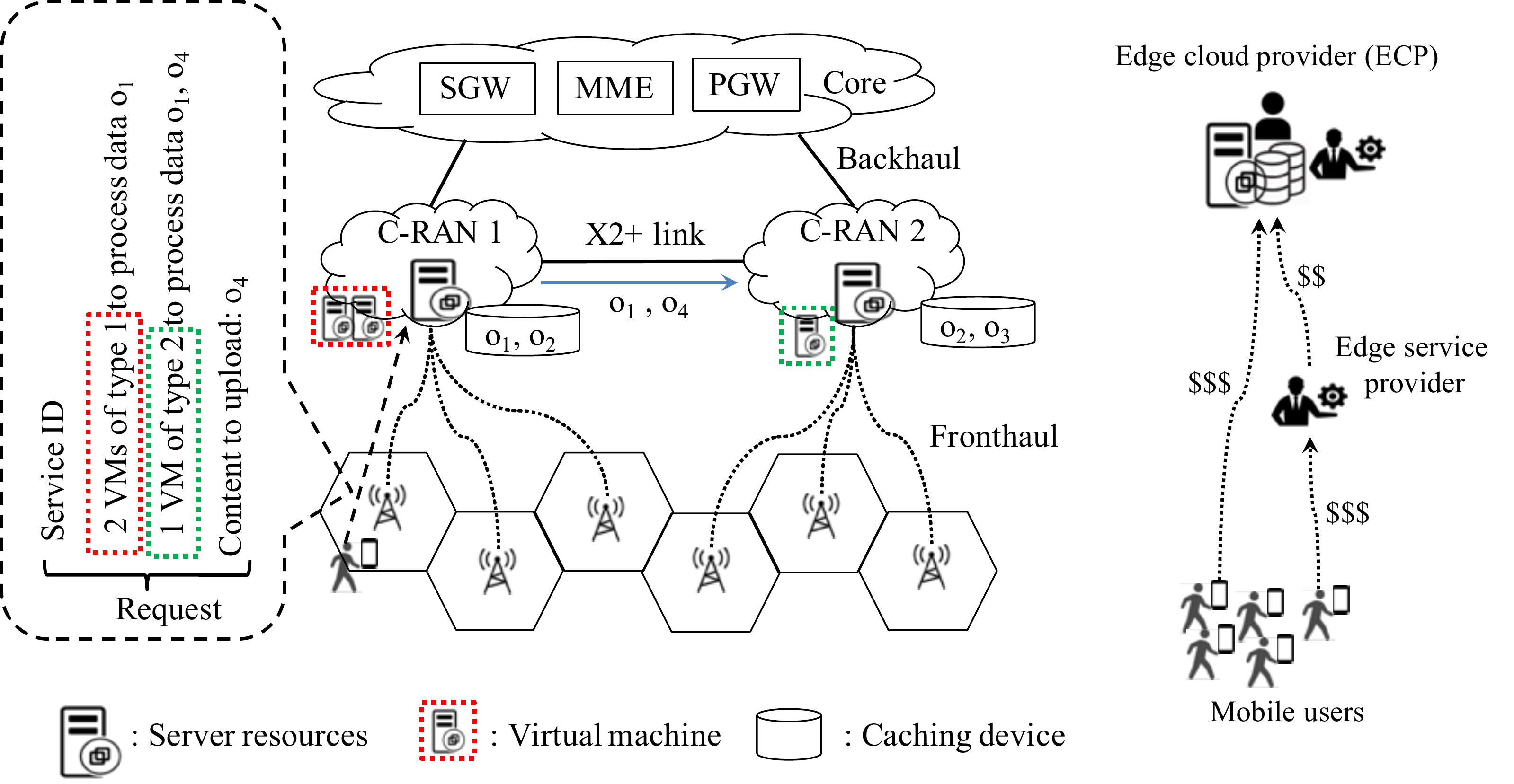}}
\caption{General-purpose edge computing}
\label{fig_edge}
\end{figure*}
ECPs operate multiple distributed edge datacenters, which are interconnected with high-speed networks. Recently, adopting C-RAN has been a major trend in mobile-network evolution. In this work, we consider the case of edge datacenters being co-located with C-RANs. With a distributed edge cloud network, an ECP provides general-purpose edge-computing services to mobile service providers and users like distributed cloud providers. Specifically, ECPs provide isolated environments with virtual machines (VMs) or containers for mobile devices associated with specific mobile services (identified by service ID) to execute computing tasks. The general-purpose edge-computing framework is supposed to support highly customized requests. Fig. \ref{fig_edge} shows an example. In the request, the mobile device requires $2$ VMs of type $1$ to process data $o_1$ and $1$ VM of type $2$ to process data $o_1$ and $o_4$. Notice that $o_1$ and $o_4$ already exist on the Internet, whereas $o_4$ is uploaded by the mobile device. When receiving such a request, an ECP should decide how to provide the VMs to satisfy the request. In Fig. \ref{fig_edge}'s example, the ECP assembles two type-$1$ VMs in C-RAN $1$ and one type-$2$ VM in C-RAN $2$. Notice that as $o_1$ is already placed in C-RAN $1$, the type-$1$ VMs can fetch the data from the local cache, while because $o_1$ and $o_4$ are not placed in the C-RAN $2$, the type-$2$ VM has to fetch data $o_1$ and $o_4$ across the link between C-RANs $1$ and $2$. With this process, the ECP obtains revenue by providing VMs like cloud providers. 

Three kinds of business entities could be involved in this process - ECPs, edge service providers, and mobile users - that have flexible business relationships among them. For example, a mobile user could subscribe to a mobile service from an edge service provider, and the edge service provider has a contract with the ECP so that, when the ECP receives a request associated with the service, the ECP allocates the VMs to the mobile device and charges the edge service provider. Fig. \ref{fig_edge} shows the monetary flow. Notice that the ECP and edge service provider are logical and conceptual. It is likely that in practice, the mobile-network carrier acts as an ECP because they operate the datacenters of C-RAN. The ECP could also be an edge service provider that provides original service to mobile users. 

\subsection{Challenges and contributions}
From the viewpoint of ECPs, a complicated joint optimization problem exists regarding VM allocation and public data placement. As for the resource-allocation problem, due to the heterogeneous features of the physical resources of distributed edge datacenters, optimizing resource-allocation decisions without the statistical knowledge of future requests is challenging. To maximize long-term revenue, maintaining the availability of each resource in each edge datacenter is important. Allocating resources only considering proximity would deplete certain resources in certain edge datacenter facing specific request sequences. Besides revenue, the cost incurred by inter-cloud data transportation cannot be ignored. Placing popular data in distributed caches helps reduce inter-cloud data transportation, but resource-allocation and data-placement decisions have complicated interactions. For example, in Fig. \ref{fig_edge}, assembling the type-$2$ VM in C-RAN $1$ can save the inter-cloud data-transportation cost of fetching $o_1$ and $o_4$; however, it would also cause a possible resource availability issue for C-RAN $1$. Caching $o_1$ in C-RAN $2$ earlier helps save the inter-cloud data-transportation cost if assembling a type-$2$ VM in C-RAN $2$, but due to the size limit of the caching device, other data may have to be emitted. Without knowing the future request sequence, it is challenging to design an algorithm to make such online decisions. Even if the future request sequence can be predicted well, well-known online learning algorithms, such as the Markov decision process, fail to compute results in a reasonable period given a huge state space. In this article, by comprehensively analyzing the properties and interactions of resource-allocation and data-placement decisions, we present an online mechanism that can solve the joint optimization problem efficiently and needs no prior knowledge of future request sequences. 

\subsection{Article organization}  
The remainder of this article is organized as follows: Section \ref{sec_model} presents the problem definition with a mathematical model, and Section \ref{sec_solution} presents the detailed solution, including the framework and sub-routines. The performance of the proposed method is analyzed theoretically in Section \ref{sec_performance} and experimentally in Section \ref{sec_evaluation}. We then introduce related work in Section \ref{sec_relatedworks} and conclude the article in Section \ref{sec_conclusions}.

\section{System model and problem formulation}
\label{sec_model}
In this section, we present our vision of general-purpose edge computing in detail, followed by the settings of the considered model. Finally, we formulate the time-average revenue-optimization problem subject to various constraints. Tab. \ref{tab_concepts} summarizes the key notations used in this article.
\renewcommand{\arraystretch}{1.3}
\begin{table}[htbp]
    \caption{Symbols used in this article}
    \label{tab_concepts}
    \begin{center}
        \begin{tabular}{c|p{6.5cm}}
        \hline
        Symbol & Explanation \\
        \hline
        $i$ & Edge cloud index \\
        \hline
        $k$ & VM type index \\
        \hline
        $K$ & The number of VM types \\
        \hline
        $l$ & Request index \\
        \hline
        $\mathbb{O}_{l,k}$ & The set of data from request $l$ associated with each type-$k$ VM \\
        \hline
        $L^l$ & The length of service required by $l$ \\
        \hline
        $c_{i, r, t}$ & Available amount of resource $r$ in edge cloud $i$ at time slot $t$ \\
        \hline
        $g_{k, r}$ & Amount of resource $r$ to assemble one type-$k$ VM \\
        \hline
        $A^l$ & Set of feasible resource allocation configuration for request $l$ \\
        \hline
        $N_{A, i, k}^l$ & The number of type-$k$ VMs in edge cloud $i$ with configuration $A$ \\
        \hline
        $p_k$ & Revenue rate for hosting one VM of type $k$ \\
        \hline
        $R(T)$ & The revenue obtained from coarse-grained time slot $T$ \\
        \hline
        $C(T)$ & The data-transportation cost incurred in coarse-grained time slot $T$ \\
        \hline
        $\mathcal{C}$ & the predetermined upper bound of the time-average transportation cost \\
        \hline
        $w_{i, j}$ & The latency between edge cloud $i$ and $j$ \\
        \hline
        $Q(T)$ & The virtual queue backlog of coarse-grained time slot $T$ \\
        \hline
        $x_A^l$ & $0-1$ decision variable which is set to $1$ if configuration $A$ is adopted for request $l$ \\
        \hline
        $d_{i, o}(T)$ & The aggregated demand from $i$ for data $o$ during coarse-grained time slot $T$ \\
        \hline
        $\mathbb{S}_i$ & The placed public data set in edge cloud $i$ \\
        \hline
        $\mathcal{S}_i$ & The cache size in edge cloud $i$ \\
        \hline
        $\alpha^l$ & The Lagrangian multiplier corresponding to constraint \eqref{equ_allocation_con} \\
        \hline
        $\beta_{i, r, t}^l$ & The Lagrangian multiplier corresponding to constraint \eqref{equ_resource_con} \\
        \hline
        \end{tabular}
    \end{center}
\end{table} 

\subsection{Settings}
We assume that an ECP operates multiple edge clouds distributed but interconnected with a high-speed network. Each edge cloud is indexed by $i$ (sometimes $j$). The latency between edge cloud $i$ and $j$ is denoted as $w_{i, j}$. Each edge cloud has multiple kinds of physical resources (e.g., CPU, memory, storage, etc.) that can be used to assemble specific types of VMs to mobile users. We also assume multiple kinds of resources and index specific resource with $r$. The amount of resource $r$ for assembling $1$ type-$k$ VM is denoted as $g_{k, r}$. The ECP provides $K$ types of VM, each of which is indexed with $k$. The ECP obtains revenue at the rate $p_k$ for providing type-$k$ VMs per unit time. In addition to the resources for computing purposes, there are also low-cost caching devices with which the edge cloud can place popular data locally to reduce inter-cloud data-transportation costs and better serve mobile users. The size of data $o$ is denoted as $s_o$, and the size of the cache in edge cloud $i$ is denoted as $\mathcal{S}_i$. The ECP operates caching as supplementary to reduce the operational cost, and it is transparent to mobile service providers and users. We index requests with $l$. A request contains the service identifier, time length, a set of VMs associated with the data list to process, and the data list to upload. The service identifier is for the ECP to identify the associated mobile service provider for charging purposes. The time length is used to specify the period of VM usage. The data requested could either be existing data on the Internet (public data) or uploaded from the mobile user that initiates the request (private data). For example, in Fig. \ref{fig_edge}, the requested data $o_1$ is public (i.e., can be found in the Internet), while $o_4$ is uploaded from a mobile device. We denote the set of data from request $l$ associated with the type-$k$ VM as $\mathbb{O}_{l, k}$. For a request $l$, there are various resource-allocation strategies to satisfy the request. Denoting $A^l$ as a specific feasible strategy, we associate a binary decision variable $x_A^l \in \{0,1 \}$ to it. Specifically, $x_A^l$ equals $1$ if $A^l$ and $0$ otherwise. For a request, there is no need to adopt more than one resource-allocation strategy to satisfy it. Formally, we have the following constraints on $x_A^l$:
\begin{align}
\label{equ_allocation_con}
    \sum_{A \in A^l} x_A^l \leq 1, \quad x_A^l \in \{0, 1\} \quad \forall l, A.
\end{align}
 
\subsection{Time scales}
In this article, we adopt two kinds of time scales: a coarse-grained time slot indexed by $T$ and a fine-grained time slot indexed by $t$. The coarse-grained time slot captures long-term properties, such as time-average revenue, time-average transportation cost, and data popularity. We make decisions related to these relatively time-insensitive properties with the coarse-grained time slot. Besides the time-insensitive properties and tasks, there are also time-sensitive ones; for example, when a request arrives, the ECP should not buffer the request until the end of a certain time slot but must allocate resources immediately. For resource allocation, because the decision is subject to the amount of each kind of resource in each edge cloud, for every such constraint, we introduce a dual variable to evaluate the shadow price of the physical resource. Specifically, we construct the dual problem from the primal problem. The arrival of a new request $l$ introduces a new set of decision variables $x_A^l$ to the primal problem while introducing a new set of constraints to the dual problem. When a certain resource is allocated, we increase the dual variable (shadow price) associated with the resource exponentially to 1) reflect the rarity of the resource and 2) keep the ratio of the primal value increment and dual value increment a fixed value so that the solutions are always competitive with any sequence of request arrivals. However, when a request process finishes, the allocated resource is released, increasing the amount of available resources; correspondingly, the competitiveness changes. For this purpose, we further discretize each coarse-grained time slot into multiple fine-grained time slots. In a fine-grained time slot $t$, we only consider the arriving requests during $t$ and neglect the finished requests. At the end of $t$, we update the available resource amount by considering all released resources during $t$ and initiate a new primal and dual problem pair for $t + 1$. Correspondingly, the available resource amount is subscripted by $t$, and for each $t$, the following constraint must be satisfied:
\begin{align}
\label{equ_resource_con}
    \sum_{l: t \in t_l} \sum_{A \in A^l} N_{A, i, k}^l g_{r, k} x_A^l \leq c_{i, r, t}  
    \quad \forall i, r, t, 
\end{align}
where $t_l$ is the set of the fine-grained time slots contained in $l$'s request, $g_{r, k}$ denotes the amount of resource $r$ needed to assemble each type-$k$ VM, and $N_{A, i, k}^l$ is the number of type-$k$ VMs allocated in edge cloud $i$ if the allocation strategy $A \in A^l$ is adopted. This method helps approximate the theoretically competitive solutions. In this work, the requests arrive at arbitrary times, and our method responds immediately rather than buffer the requests until the end of a time slot. 

\subsection{Revenue} 
In this work, instead of the revenue during a specific period, we aim to maximize long-term ECP revenue. We present the long-term revenue with a time-average formulation. Specifically, denoting the revenue obtained from a coarse-grained time slot $T$ as $R(T)$, we have
\begin{align}
    R(T) = \sum_{l: T \leq \tau^l < T + 1} L^l \sum_{A \in A^l} \sum_{k} p_k \sum_{i} N_{A, i, k}^l x_A^l,
\end{align}
where $\tau^l$ is the arrival time of $l$, and $p_k$ is the price rate for a type-$k$ VM. Given the revenue obtained per time slot, we can define the time-average revenue of the ECP as
\begin{align}
    \overline{R(T)} = \lim_{\mathcal{T} \to \infty}{(1/ \mathcal{T}) \sum_{T = 0}^{\mathcal{T} - 1}{ R(T)}}.
\end{align}

\subsection{Data-transportation costs}
Inter-cloud data transportation is inevitable, decreasing the quality of service (QoS) due to latency and, even worse, making it expensive for the ECP to use the limited inter-cloud links to transport bulk data. In this article, inter-cloud data-transportation costs occur due to two reasons: 1) when the ECP does not provision the VMs for a mobile user in the edge cloud with which the mobile device is associated, the ECP must transport the user's uploaded data to the other edge clouds where the VMs are provisioned with the inter-cloud links; 2) if a certain user's VM cannot find the data to be processed in the local cache, the VM tries to fetch the data from its neighboring edge clouds, which incurs inter-cloud data-transportation costs. In this article, we denote the data-transportation cost for data $o$ with size $s_o$ between the edge clouds $i$ and $j$ as $s_o w_{i, j}$, where $w_{i, j}$ is the latency between $i$ and $j$. We also denote the data cached in edge cloud $i$ as $\mathbb{S}_i$ and the dataset required by $l$ associated with each type $k$ VM as $\mathbb{O}_{l, k}$. Provided the data requirements and denoting $f_{i, j, o}$ as the amount that $i$ obtains from $j$ for $o$, we can formulate the transportation cost in a coarse-grained time slot $T$ as
\begin{align}
\label{equ_transportcost}
    &C(T) = \\ \notag
    &\sum_{l: T \leq t^l < T + 1} \sum_{A \in A^l} 
    \left (  \sum_{i} N_{A, i, k}^l 
    \sum_{o \in \mathbb{O}_{l, k} - \mathbb{S}_i } \sum_{j: o \in \mathbb{S}_j} w_{i, j} f_{i, j, o} \right ) x_A^l.
\end{align}
Given that the transportation cost for the unit traffic between locations is fixed, the fractional routing $f_{i, j, o}$ becomes superfluous because the request for object $o$ is always directed to the $j$ with the lowest $w_{i, j}$, that is, 
\begin{align}
    j^* = \argmin { \{ w_{i,j}: o \in \mathbb{S}_j \} }.  \notag
\end{align}
Thus, \eqref{equ_transportcost} becomes:
\begin{align}
\label{equ_transportcost1}
    &C(T) = \\ \notag
    &\sum_{l: T \leq t^l < T + 1} \sum_{A \in A^l}  
    \left (  \sum_{i} N_{A, i, k}^l 
    \sum_{o \in \mathbb{O}_{l, k} - \mathbb{S}_i } w_{i, j^*} s_o  \right ) x_A^l.
\end{align}
The transportation cost is inevitable and should be kept under a certain level. However, the transportation-cost constraint is different than resource-capacity constraints in that the resource capacity can never be violated, while for the transportation cost constraint, violation in certain time slots is allowable. In particular, in this article, we consider controlling the time-average transportation cost under a predetermined number $\mathcal{C}$. Specifically, we denote the time-average transportation cost as
\begin{align}
    \overline{C(T)} = \lim_{\mathcal{T} \to \infty}{(1/ \mathcal{T}) \sum_{T = 0}^{\mathcal{T} - 1}{ C(T)}},
\end{align}
and let 
\begin{align}
\label{equ_tranportationcost_con}
    \overline{C(T)} \leq \mathcal{C},
\end{align}
With such relaxation, we can better take advantage of the request dynamics.

\subsection{Problem formulation}
By summarizing the above models, we can define the revenue-maximization problem as follows:
\begin{align}
\label{equ_obj}
          & \max{ \overline{R(T)}} \\
    s.t. & \ \eqref{equ_allocation_con}, \ \eqref{equ_resource_con}, and \ \eqref{equ_tranportationcost_con} \notag.
\end{align}

Notice that the above problem definition has two sub-problems to solve: the computing resource-allocation problem and the public data-placement problem. The two problems have different temporal and spatial characteristics. For the temporal aspect, because the request could arrive at an arbitrary time, the ECP must make computing resource-allocation decisions immediately; however, because the popularity distribution of the data changes relatively slowly, data placement should be determined based on observing a relatively long period, and frequent re-allocation does not help enhance the hit ratio. For the spatial aspect, VM migration between edge clouds is complicated, so if a computing resource-allocation decision is made, it does not change in the future, whereas for data placement, data should be re-allocated to improve the hit rate and thus reduce the data-transportation cost. In the next section, we describe our approach to solve the computing resource-allocation and public data placement-optimization problems jointly. 

\section{The online approach to solve the joint optimization problem}
\label{sec_solution}
In this section, we present our approach to solve the computing resource-allocation and data-placement problems jointly. Our approach works with hybrid timescales: a fine-grained time slot for computing resource allocation and a coarse-grained time slot for public data placement. Fig. \ref{fig_timescale} shows the framework. 
\begin{figure}[htbp]
\centerline{\includegraphics[scale=0.35,bb=0 0 524 357]{./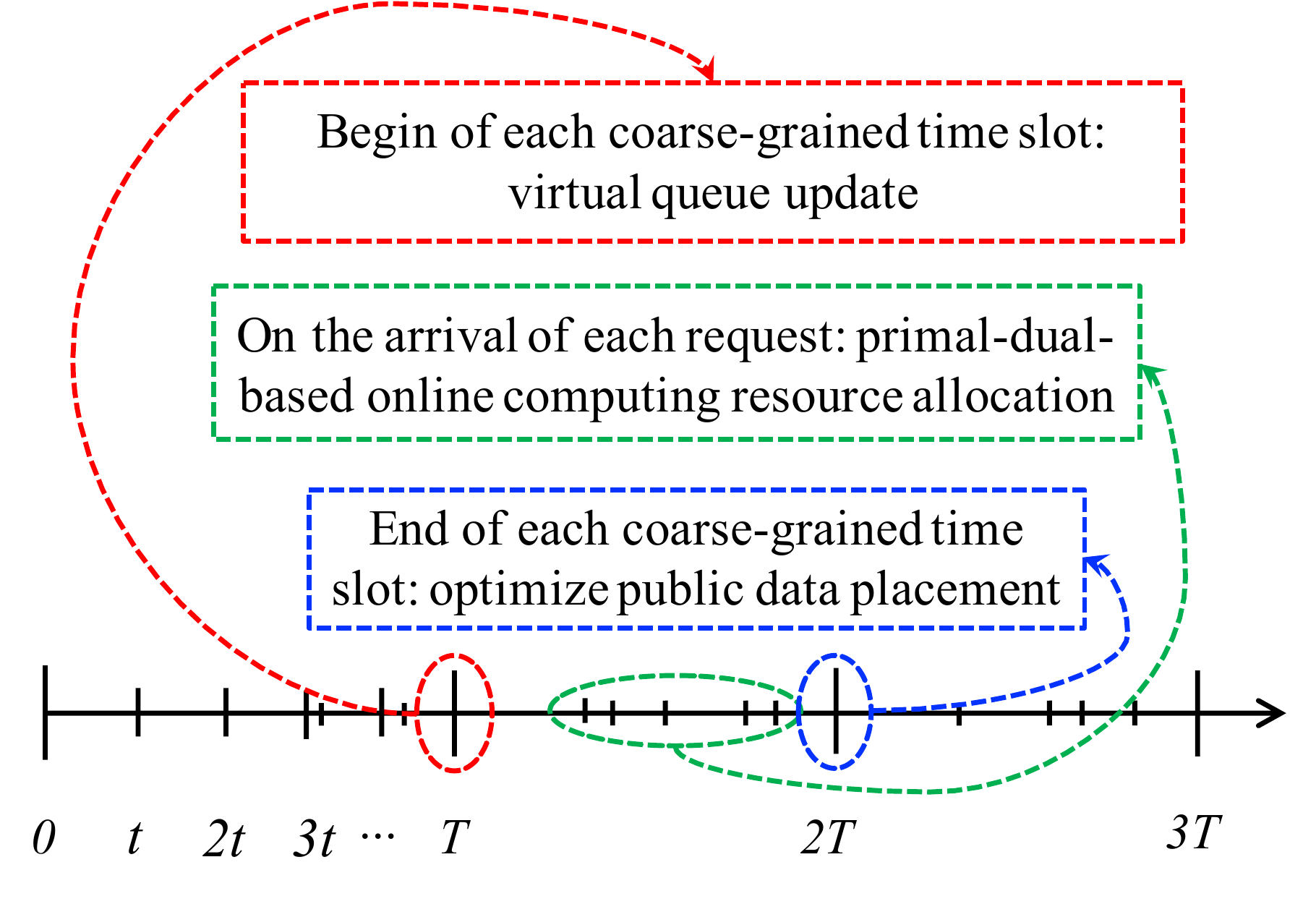}}
\caption{Framework of the online mechanism with hybrid timescales}
\label{fig_timescale}
\end{figure}
For the coarse-grained timescale, we notice that constraint \eqref{equ_tranportationcost_con} has a time-average formulation, which allows us to trade off the objective function and the constraint violation. For this purpose, we introduce a virtual queue whose length represents the ``budget" for transportation costs that can be ``consumed" in the next coarse-grained time slot, and then we employ the drift-plus-penalty method to optimize the long-term revenue and stabilize the virtual queue. Specifically, at the beginning of each coarse-grained time slot, we accumulate the transportation cost from the previous time slot, update the backlog of the virtual queue, and then make computing resource-allocation decisions to optimize the sum of the weighted revenue and the ``drift" of the virtual queue. 

This drift-plus-penalty-based method has attracted much interest in recent years because it can provide near-optimal solutions without assumptions about future knowledge. However, it usually works in a buffer-and-decision manner, that is, buffering the requests for a certain amount of time and then making a decision on how to deal with the buffered requests. For this reason, the vanilla drift-plus-penalty method works well for delay-tolerant tasks but is incapable of addressing real-time problems. In this work, we overcome the shortcomings of the drift-plus-penalty method by introducing a primal-dual online algorithm to address the real-time requirement of computing resource allocation. Because the vanilla form of the primal-dual algorithm cannot be applied to the scenario of tasks finishing in a finite amount of time, we further discretize the coarse-grained time slot into fine-grained time slots to approximate the results obtained from the vanilla primal-dual algorithm. 

When a request arrives, it is responded to in real time. At the end of each coarse-grained time slot, we re-allocate the data among different edge clouds to reflect the data-popularity change in the past time slot. With the proposed method, the immediate effect of computing resource-allocation decisions can be reflected in the long term, and long-term data-allocation decisions can guide computing resource allocation for a relatively long period. 

\subsection{The online joint optimization framework}
We introduce a virtual queue to represent the difference between the actual transportation cost and the predetermined upper bound $\mathcal{C}$. By introducing the virtual queue, the constraint \eqref{equ_tranportationcost_con} can be satisfied with queue-stabilization technology. Specifically, we denote the queue backlog of time slot $T$ as $Q(T)$, and the dynamics are defined as
\begin{align}
\label{equ_queueupdate}
    Q(T + 1) = \max \{Q(T) + C(T) - \mathcal{C}, 0\}
\end{align}
where $\mathcal{C}$ is the predetermined upper bound of the time-average transportation cost. The fact that the virtual queue is stable implies that the constraint \eqref{equ_tranportationcost_con} is satisfied. This can be seen from the following lemma: 
\begin{lemma}
$\lim_{T \to \infty}{Q(T)/T = 0}$ implies $\overline{C(T)} \leq \mathcal{C}$.
\end{lemma}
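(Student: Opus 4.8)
The plan is to exploit the telescoping structure of the virtual-queue recursion \eqref{equ_queueupdate} together with the non-negativity of the $\max$ operator; this is the standard Lyapunov-style argument for time-average constraint satisfaction. First I would discard the $\max$: since $\max\{a, 0\} \geq a$ for every real number $a$, the update rule \eqref{equ_queueupdate} immediately yields $Q(T+1) \geq Q(T) + C(T) - \mathcal{C}$, and hence $C(T) \leq Q(T+1) - Q(T) + \mathcal{C}$ for all $T \geq 0$.

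Next I would sum this inequality over $T = 0, 1, \ldots, \mathcal{T} - 1$. The right-hand side telescopes, giving $\sum_{T=0}^{\mathcal{T}-1} C(T) \leq Q(\mathcal{T}) - Q(0) + \mathcal{T}\,\mathcal{C}$. Dividing by $\mathcal{T}$ and using that $Q(0) \geq 0$ (indeed $Q(0) = 0$ at initialization), I obtain $\frac{1}{\mathcal{T}} \sum_{T=0}^{\mathcal{T}-1} C(T) \leq \mathcal{C} + \frac{Q(\mathcal{T})}{\mathcal{T}}$.

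Finally I would pass to the limit $\mathcal{T} \to \infty$. By the hypothesis $Q(\mathcal{T})/\mathcal{T} \to 0$, the right-hand side converges to $\mathcal{C}$, so $\overline{C(T)} = \lim_{\mathcal{T}\to\infty} \frac{1}{\mathcal{T}} \sum_{T=0}^{\mathcal{T}-1} C(T) \leq \mathcal{C}$, which is precisely \eqref{equ_tranportationcost_con}.

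The argument is essentially a single telescoping estimate, so there is no substantial obstacle. The only point meriting care is the limit step: strictly speaking one should state the conclusion for the $\limsup$ of the time-averages unless the limit defining $\overline{C(T)}$ is assumed to exist, and one needs $Q(0)$ finite (which holds by construction). These are bookkeeping details rather than genuine difficulties.
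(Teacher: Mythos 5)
Your proposal is correct and follows essentially the same argument as the paper: drop the $\max$ to get $Q(T+1)-Q(T) \geq C(T)-\mathcal{C}$, telescope over $T$, divide by $\mathcal{T}$, and let the hypothesis $Q(\mathcal{T})/\mathcal{T}\to 0$ kill the queue term. Your remark about $\limsup$ and the finiteness of $Q(0)$ is a reasonable bookkeeping refinement that the paper glosses over, but the substance is identical.
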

\begin{proof}
From \eqref{equ_queueupdate}, we have $Q(T+1) = \max \{Q(T) + C(T) - \mathcal{C}, 0 \} \geq Q(T) + C(T) - \mathcal{C}$. Thus, $Q(T+1) - Q(T) \geq C(T) - \mathcal{C}$ for all $T$. Use telescoping sums over $T \in \{0, ..., \mathcal{T} - 1 \}$, we have $Q(\mathcal{T}) - Q(0) \geq \sum_{T = 0}^{\mathcal{T} - 1}{(C(T) - \mathcal{C})}$. Dividing by $\mathcal{T}$ and taking the limit as $\mathcal{T} \to \infty$, we have
\begin{align}
    \overline{C(T)} - \mathcal{C} \leq \lim_{\mathcal{T} \to \infty} {\frac{Q(\mathcal{T})}{\mathcal{T}}}. \notag 
\end{align}
From \cite{Neely2010}, we know that the virtual queue rate is stable only if 
\begin{align}
    \lim_{T \to \infty} {\frac{Q(T)}{T}} \leq 0.  \notag
\end{align}
Thus, the lemma is proved. 
\end{proof}
To stabilize the virtual queue, we define the Lyapunov function as 
\begin{align}
    L(Q(T)) = \frac{1}{2} Q^2(T), \notag
\end{align}
and the $1$-slot drift as
\begin{align}
    \Delta_1(Q(T)) = L(Q(T + 1)) - L(Q(T)). \notag
\end{align}
Following the drift-plus-penalty method \cite{Neely2010}, we can make the time-average revenue $\overline{R(T)}$ within a constant optimality gap and stabilize the virtual queue simultaneously by maximizing a lower bound of the following term in each coarse-grained time slot:
\begin{align}
\label{equ_drift_plus_p}
    V R(T) - \Delta_1(Q(T)), 
\end{align}
where $V$ is a predetermined non-negative parameter to control the trade-off between revenue and the data-transportation cost violation. We assume that $C^2(T)$ is deterministically upper-bounded by ${C^{max}}^2$; a lower bound of the above term can then be obtained from the dynamics of the virtual queue. In particular,
\begin{align}
    \qquad & L(Q(T + 1)) - L(Q(T)) \\ \notag
    & = \frac{1}{2} \left( Q^2(T + 1) - Q^2(T) \right) \\ \notag
    & = \frac{1}{2} \left( {\max \{ Q(T) + C(T) - \mathcal{C}, 0\}}^2 - Q^2(T) \right) \\ \notag
    & \leq Q(T)(C(T) - \mathcal{C}) + B, \notag
\end{align}
where $B = \frac{ \max \{ {C^{max}}^2, \mathcal{C}^2 \} } {2}$. Multiplying both sides of the above inequation by $(-1)$ and adding $VR(T)$ to each side yields a lower bound of \eqref{equ_drift_plus_p}: 
\begin{align}
\label{equ_lowerbound}
    VR(T) - \Delta_1(Q(T)) \geq VR(T) - Q(T)(C(T) - \mathcal{C}) - B. 
\end{align}
In each coarse-grained time slot, we greedily maximize the right-hand side of \eqref{equ_lowerbound} with constraints \eqref{equ_resource_con} and \eqref{equ_allocation_con}, and in the long run, we approximate the global optimality and maintain the virtual queue stable. Alg. \ref{alg_framework} summarizes the entire process:
\begin{algorithm}[H]
\begin{algorithmic}[1]
\caption{The online joint optimization framework}
\label{alg_framework}
\REQUIRE \
            
    The backlog of the virtual queue: $Q(T)$
    
\ENSURE \
    
    The computing resource allocation decision $\{x_A^l\}$
    
    The public data-placement decision $\{\mathbb{S}_i\}$ \
    
    \STATE{Update the virtual queue in accordance with \eqref{equ_queueupdate}}
    \FOR{each request arriving between $T$ and $T + 1$}
        \STATE{Execute Alg. \ref{alg_primaldual} to decide computing resource allocation}
    \ENDFOR
    \STATE{Compute the aggregated data demand $\{d_{i, o}(T)\}$ at the end of the current time slot}
    \STATE{Execute Alg. \ref{alg_cachingdecision} to update the data placement $\{\mathbb{S}_i\}$ based on $\{d_{i, o}(T)\}$}
\end{algorithmic}
\end{algorithm}
In the following sections, we present the two sub-routines of this algorithm in detail.

\subsection{Online computing-resource allocation}
The requests would arrive at arbitrary times, and the original drift-plus-penalty method only works in a buffering-and-deciding manner. In this work, we make a fundamental extension to the drift-plus-penalty algorithm with a primal-dual approach. To explain in detail, we reformulate the optimization problem of a coarse-grained time slot $T$ as follows:
\begin{align}
\label{equ_obj_realtime} 
    & \max_{x} \sum_{l: T \leq t^l < T + 1} L^l \sum_{A \in A^l} \widetilde{R}_A^l \ x_A^l \\
    & s.t. \qquad \eqref{equ_resource_con}, \ \eqref{equ_allocation_con} \notag 
\end{align}
where 
\begin{align}
    & \widetilde{R}_A^l = \notag \\
    & V \sum_k p_k \sum_i N_{A, i, k}^l - \sum_{i, k} N_{A, i, k}^l
        \left ( \sum_{ o \in \mathbb{O}_{l, k} \cup {o \notin \mathbb{S}_i} } w_{i, j^*} s_o / L^l  \right ) \notag \\
    & = \sum_{i, k} N_{A, i, k}^l \left ( 
        V p_k - \sum_{ o \in \mathbb{O}_{l, k} \cup {o \notin \mathbb{S}_i} } w_{i, j^*} s_o / L^l 
    \right ).
\end{align}
For convenience, we denote the ``component" of edge cloud $i$ of $\widetilde{R}_A^l$ as 
\begin{align}
    \widetilde{R}_{A, i}^l 
    = \sum_{k} N_{A, i, k}^l \left ( 
       V p_k - \sum_{ o \in \mathbb{O}_{l, k} \cup {o \notin \mathbb{S}_i} } w_{i, j^*} s_o / L^l
    \right ),
\end{align}
and it is clear that $\widetilde{R}_A^l = \sum_i \widetilde{R}_{A, i}^l$. We introduce the Lagrangian multiplier $\alpha^l$ to constrain \eqref{equ_allocation_con} and $\beta_{i, r, t}$ to constrain \eqref{equ_resource_con}, respectively, so we have the dual problem as follows:
\begin{align}
\label{equ_alg_dual} 
    & \min_{\alpha, \beta} \sum_l \alpha^l + \sum_i \sum_r \sum_t c_{i, r, t} \beta_{i, r, t} \\
    s.t. \notag \\
    & \alpha^l \geq L^l \widetilde{R}_A^l - 
        \sum_{i, r, t, k} N_{A, i, k}^l g_{r, k} \beta_{i, r, t} \quad \forall A \in A^l, \ \forall l \\
    & \alpha^l \geq 0, \ \beta_{i, r, t} \geq 0. 
\end{align}
With the Lagrangian multipliers, we propose the following online algorithm to respond to users' requests:
\begin{algorithm}[H]
\begin{algorithmic}[1]
\caption{Online computing-resource allocation algorithm}
\label{alg_primaldual}
\REQUIRE \
    
    The backlog $Q(T)$ of the virtual queue
    
    The public data-placement profile $\{\mathbb{S}_i\}$
    
    The request sequence (indexed by $l$)
    
\ENSURE \
 Computing resource-allocation decisions  
    
    \STATE{Initialize $\beta_{i, r, t} \leftarrow 0$ $\forall i, r, t$}
    \STATE{For each request $l$} \
    
        Compute the $A^* \in A^l$ such that \
        
        \begin{align*}
            A^*  = \argmax_{A \in A^l} \left( L^l \widetilde{R}_A^l - \sum_{i, r, t, k} N_{A, i, k}^l g_{r, k} \beta_{i, r, t} \right)
        \end{align*}
    
    \IF{$L^l \widetilde{R}_{A^*}^l - \sum_{i, r, t, k} N_{A^*, i, k}^l g_{r, k} \beta_{i, r, t} < 0$ or $\beta_{i, r, t} > 1$}
        \STATE{Reject the request}
    \ELSE
        \STATE{Allocate resources to $l$ as $A^*$}
        \STATE{Update $\beta_{i, r, t}$ as}
        \begin{align*}
            & \beta_{i, r, t} \leftarrow \beta_{i, r, t} \left(1 + \frac{\sum_k N_{A^*, i, k}^l g_{r, k}}{c_{i, r, t}} \right) 
            + \frac{1}{e - 1} \frac{ \widetilde{R}_{A^*, i}^l }{K c_{i, r, t}} 
        \end{align*}
        \STATE{Set} 
        
        $\alpha^l \leftarrow L^l \widetilde{R}_{A^*}^l - \sum_{i, r, t, k} N_{A^*, i, k}^l g_{r, k} \beta_{i, r, t}$   
    \ENDIF
\end{algorithmic}
\end{algorithm}

\subsection{Public data placement}
Notice that the public data-popularity distribution reflects the relatively long-term aggregating requirements; we do not update the data placement at the arrival of individual requests but at the end of each coarse-grained time slot. 

We denote the size of content $o$ as $s_o$ and the size of the storage in edge cloud $i$ as $\mathcal{S}_i$. Given the aggregating data demand in time slot $T$, the data-placement problem can be modeled as a transportation cost minimization problem subject to the storage-size constraints. Remember that $d_{i, o}(T)$ is the aggregating data demand in edge cloud $i$ for $o$ during time slot $T$. Clearly, 
\begin{align}
    d_{i, o}(T) = \sum_{l: T \leq t^l < T + 1} \sum_{A \in A^l} \sum_{k} N_{A, i, k}^l x_A^l  
        \sum_{ o \in \mathbb{O}_{l, k} \cup {o \notin \mathbb{S}_i} }  s_o.
\end{align} 
For edge cloud $i$, we denote the placed data set as $\mathbb{S}_i$. We call a data placement in $i$ feasible if the total size of the placed data does not exceed the storage size, that is,
\begin{align}
\label{equ_cache_feasible}
    \sum_{o \in \mathbb{S}_i} s_o \leq \mathcal{S}_i.
\end{align}
We define the set of all the feasible data-placement profiles of $i$ as $\mathcal{F}_i$ and introduce an indicator vector $\{ fc_i^1, fc_i^2, ..., fc_i^{|\mathcal{F}_i|} \}$ to denote whether a feasible data-placement profile is adopted, where $|\mathcal{F}_i|$ is the number of all feasible data-placement profiles of $i$. For example, suppose the universe of data as $\{ o_1, o_2,  o_3 \}$ with $s_{o_1} + s_{o_2} \leq \mathcal{S}_i$, $s_{o_1} + s_{o_3} \leq \mathcal{S}_i$, and $s_{o_1} + s_{o_3} + s_{o_2} > \mathcal{S}_i$, so $\mathcal{F}_i = \{ \{o_1\}, \{o_2\}, \{o_3\}, \{o_1, o_2\}, \{o_1, o_3\} \}$. Only one feasible caching profile in $\mathcal{F}_i$ can be selected for a specific time slot. Consider that in a certain coarse-grained time slot, $\{o_1, o_2\}$ is selected, so $fc_1 = fc_2 = fc_3 = fc_5 = 0$ and $fc_i^4 = 1$. We then have the following data-placement problem:
\begin{align}
    & {fc_i^p}^* = \argmin_{fc_i^p} \sum_{i} d_{i, o}(T) \sum_{p \leq \mathcal{F}_i} fc_i^p w_{i, j^*}  
    \label{equ_dataplacement_obj}  \\
    & s.t. \qquad \sum_{p \leq |\mathcal{F}_i|} fc_i^p = 1, \ and \ cf_i^p \in \{0, 1\}. \label{equ_dataplacement_cont}
\end{align}

\begin{lemma}
The public data-placement problem defined with \eqref{equ_dataplacement_obj} and \eqref{equ_dataplacement_cont} is NP-hard. 
\end{lemma}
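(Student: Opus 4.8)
The plan is to prove NP-hardness by a polynomial-time reduction from the decision version of the 0-1 \emph{Knapsack} problem, which is NP-complete. If the data-placement problem \eqref{equ_dataplacement_obj}--\eqref{equ_dataplacement_cont} could be solved efficiently, then so could Knapsack, contradicting standard complexity assumptions.

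First I would set up the reduction. Given a Knapsack instance with items $m = 1,\dots,n$ of integer sizes $a_m > 0$ and integer values $v_m > 0$, capacity $W$, and target $V$, I construct a data-placement instance with two edge clouds: a ``client'' cloud $1$ with cache size $\mathcal{S}_1 = W$, and a ``server'' cloud $2$ with cache size $\mathcal{S}_2 = \sum_m a_m$ large enough to hold every item, so that in \emph{every} feasible placement $\mathbb{S}_2$ contains all data. I introduce $n$ public data items $o_1,\dots,o_n$ with $s_{o_m} = a_m$, set the inter-cloud latency $w_{1,2} = 1$ (with $w_{1,1} = w_{2,2} = 0$), and set the aggregated demands $d_{1,o_m}(T) = v_m$, $d_{2,o_m}(T) = 0$. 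This instance has size polynomial in the Knapsack instance and respects the model's semantics: fixed nonnegative pairwise latencies, nonnegative demands, and the feasibility constraint \eqref{equ_cache_feasible} on each $\mathbb{S}_i$.

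Next I would analyze the objective \eqref{equ_dataplacement_obj}. Since $\mathbb{S}_2$ always contains every $o_m$ and $d_{2,o_m}(T) = 0$, cloud $2$ contributes nothing to the cost; and for the client, each item $o_m \in \mathbb{S}_1$ is served locally at latency $0$, while each $o_m \notin \mathbb{S}_1$ must be fetched from $j^{*} = 2$ at cost $d_{1,o_m}(T)\, w_{1,2} = v_m$. Hence the total transportation cost equals $\sum_{m=1}^{n} v_m - \sum_{o_m \in \mathbb{S}_1} v_m$, and the feasibility constraint on the client reduces exactly to $\sum_{o_m \in \mathbb{S}_1} a_m \le W$. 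Therefore a minimum-cost placement corresponds precisely to a maximum-value knapsack-feasible subset $\mathbb{S}_1$, and the optimal cost is at most $\sum_m v_m - V$ if and only if the Knapsack instance admits a feasible set of value at least $V$. Since both the reduction and this equivalence are computable in polynomial time, the data-placement problem is NP-hard; as its decision version is clearly in NP, it is in fact NP-complete. I would also note that the same argument goes through with a single cloud if uncached public data is modeled as fetched from an origin server on the Internet at a fixed latency, the server cloud above merely playing the role of that origin.

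I do not expect a genuine obstacle here; the only delicate point is keeping the reduction faithful to the model---ensuring that demand for uncached data is always chargeable somewhere (handled by the always-full server cloud / origin), that the constructed latencies are consistent across all cloud pairs, and that the formally exponential-length vector $\{fc_i^p\}$ in \eqref{equ_dataplacement_obj} is read as shorthand for the natural compact input (data sizes, per-cloud demands, latencies, and cache capacities), under which the reduction is polynomial.
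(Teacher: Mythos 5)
Your reduction is correct, but it takes a different route from the paper: the paper gives no explicit reduction at all, merely asserting that the data-placement problem ``can be reduced to a general assignment problem (GAP), so it is NP-hard'' --- a one-line appeal to GAP that is, as written, even stated in the wrong direction (to prove hardness one must reduce a known hard problem \emph{to} the problem at hand, which is exactly what you do). Your Knapsack reduction is essentially the single-cache special case of \eqref{equ_dataplacement_obj}--\eqref{equ_dataplacement_cont}: demands play the role of item values, data sizes the role of weights, and the cache capacity \eqref{equ_cache_feasible} the knapsack capacity, so minimizing transportation cost is exactly maximizing covered value. This buys an explicit, verifiable argument and the stronger message that the problem is already NP-hard with a single cache and two latency levels, whereas the paper's (implicit) GAP identification gestures at the multi-cloud assignment structure but proves nothing as stated. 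One small imprecision: feasibility does not force $\mathbb{S}_2$ to contain all items (it only bounds total size), so you should phrase it as ``without loss of generality an optimal solution fills cloud $2$,'' or rely on the origin-server variant you already mention, where uncached data is fetched at a latency at least $w_{1,2}$; with that wording fixed, and with the compact-input reading of the $fc_i^p$ variables that you note, the reduction is airtight and in fact yields NP-completeness of the decision version.
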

The above data-placement problem can be reduced to a general assignment problem (GAP), so it is NP-hard. Before presenting our approximation algorithm, we introduce some important concepts needed for further instruction \cite{Calinescu2011}.
\begin{definition}[Submodular function]
Let $X$ be a finite set and the function $f : 2^X \rightarrow M$. $f$ is submodular if for all $A, B \subseteq X$,
    \begin{align*}
        f(A \cup B) + f(A \cap B) \leq f(A) + f(B).
    \end{align*}
An equivalent definition of ``submodular'' is based on marginal value. Denoting the marginal value of $i$ with respect to $A$ as $f_A(i) = f(A + i) - f(A)$, if $\forall A \subseteq B \subseteq X$ and $\forall i \in A - B$, $f_A(i) \leq f_B(i)$, then $f$ is supermodular. 
\end{definition}
\begin{definition}[Simple Partition Matroid]
$X$ is a ground set partitioned into $l$ disjoint sets $X_1 \cup X_2 \cup ... \cup X_l$ with associated integers $k_1, k_2, ..., k_l$, and $I = {A \subseteq X : \left | A \cap X_i \right | \leq k_i, i = 1, ..., l}$. Then $M = (X, I)$ is a partition matroid. Specifically, if $k_i = 1$ $\forall i$, the matroid is called a ``simple partition matroid.'' 
\end{definition}

We define the ground set $X$ as $\{(i, F) | F \in \mathcal{F}_i \}$. Letting $I = \{ \{i, F_i\} \in X: F_i \in \mathcal{F}_i \}$, we can reformulate the data placement-optimization problem as follows:
\begin{align}
      \min_{\{ i, F_i \}} f( \{ i, F_i \} ) & = \sum_{l: T \leq t^l < T + 1} \sum_i \sum_{o} d_{i, o}(T) w_{i, j^*}, \label{equ_dataplacement_reformed} \\
    & s.t. \ \{ i, F_i \} \in I. \label{equ_dataplacement_reformed_cont}
\end{align}
\begin{lemma}
\label{lemma_supermodular}
    The function $f(\{ i, F_i \})$ is a monotone supermodular function.
\end{lemma}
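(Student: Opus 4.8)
The plan is to recognize $f$ as a non-negative weighted sum of elementary ``nearest-source'' cost functions, one per edge cloud and object, and then to check that each such elementary function is monotone (non-increasing) and supermodular; since both properties are preserved by non-negative linear combinations, the lemma follows. The first step makes the dependence of $f$ on the placement explicit: for a subset $S \subseteq X$ of pairs $(i,F)$ --- extending the objective of \eqref{equ_dataplacement_reformed} from the bases of the simple partition matroid $I$ to all of $2^{X}$ in the obvious way --- let $\mathcal{J}_S(o) = \{\, i : (i,F) \in S,\ o \in F \,\}$ be the set of edge clouds that store $o$ under $S$. Treating the aggregated demands $d_{i,o}(T) \ge 0$ as fixed inputs of the coarse-grained slot and regarding the origin server as an always-available source reachable from $i$ at a fixed finite cost $w_{i,\mathrm{net}}$ (so that $j^{*}$ is always well defined, with $w_{i,i}=0$ whenever $i \in \mathcal{J}_S(o)$), one groups the contributions of \eqref{equ_dataplacement_reformed} by cloud and object and obtains $f(S) = \sum_{i}\sum_{o} d_{i,o}(T)\,\phi_{i,o}(S)$, where $\phi_{i,o}(S) = \min\big( w_{i,\mathrm{net}},\ \min_{j \in \mathcal{J}_S(o)} w_{i,j}\big)$. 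It then suffices to prove the two properties for a single $\phi_{i,o}$.

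Fix $i$ and $o$, and associate with each element $e=(i',F) \in X$ the scalar $\omega_e = w_{i,i'}$ if $o \in F$ and $\omega_e = +\infty$ otherwise, together with the baseline $\omega_{\star} = w_{i,\mathrm{net}}$, so that $\phi_{i,o}(S) = \min\big(\omega_{\star},\ \min_{e \in S}\omega_e\big)$. The single identity the argument rests on is $\phi_{i,o}(S \cup \{e\}) = \min\big(\phi_{i,o}(S),\ \omega_e\big)$. Monotonicity is then immediate: for $A \subseteq B$ we have $\min_{e \in B}\omega_e \le \min_{e \in A}\omega_e$, hence $\phi_{i,o}(B) \le \phi_{i,o}(A)$, i.e.\ the cost is non-increasing in $S$. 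For supermodularity I would compute the marginal value $\big(\phi_{i,o}\big)_A(e) = \phi_{i,o}(A \cup \{e\}) - \phi_{i,o}(A) = \min\big(0,\ \omega_e - \phi_{i,o}(A)\big)$; since $A \subseteq B$ gives $\phi_{i,o}(A) \ge \phi_{i,o}(B)$, this yields $\big(\phi_{i,o}\big)_A(e) \le \big(\phi_{i,o}\big)_B(e)$ for every $A \subseteq B$ and $e \notin B$, which is precisely the marginal-value characterization of supermodularity recalled just before the lemma.

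The final step reassembles: summing over $i$ and $o$ against the non-negative weights $d_{i,o}(T)$ preserves both monotonicity and supermodularity, so $f$ is monotone supermodular; equivalently, the transportation saving $f(\emptyset) - f(\cdot)$ is monotone non-decreasing and submodular, which is the form on which the matroid-constrained $(1-1/e)$ approximation of \cite{Calinescu2011} operates. I do not anticipate a real obstacle here; the points that need care are purely the orientation of the inequalities --- the cost \emph{decreases} as more data are placed, so $f$ is supermodular and $-f$ submodular --- the treatment of elements that do not store $o$ through the $\omega_e = +\infty$ convention, and the observation that the natural extension of $f$ to non-independent subsets $S$ (taking, for a cloud appearing in several pairs of $S$, the union of its profiles) leaves the decomposition, and hence the whole argument, intact.
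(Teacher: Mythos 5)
Your proof is correct and is essentially the argument the paper relies on: the paper offers no proof of its own, deferring to Lemma 3 of \cite{Shao2019}, which rests on the same per-cloud, per-object decomposition $f(S)=\sum_{i}\sum_{o} d_{i,o}(T)\,\phi_{i,o}(S)$ with $\phi_{i,o}$ a nearest-available-copy minimum whose non-increasing monotonicity and increasing-marginals property ($\phi_{i,o}(A\cup\{e\})-\phi_{i,o}(A)=\min\{0,\omega_e-\phi_{i,o}(A)\}$, non-decreasing in $A$) you verify exactly as needed. The only peripheral nit is your closing aside invoking the $(1-1/e)$ matroid-constrained guarantee of \cite{Calinescu2011}, whereas the paper's Proposition \ref{prop_data_placement} uses the plain greedy $1/2$-approximation; this does not affect the lemma itself.
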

\begin{lemma}
\label{lemma_matroid}
    The constraint \eqref{equ_dataplacement_reformed_cont} corresponds to a simple partition matroid $M = (X, I)$.
\end{lemma}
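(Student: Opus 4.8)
\textit{Proof proposal.} The plan is to exhibit the partition of the ground set explicitly and then invoke the Definition of a simple partition matroid directly. Recall that the ground set is $X = \{ (i, F) : F \in \mathcal{F}_i \}$, so every element of $X$ carries an edge-cloud index as its first coordinate. First I would group the elements of $X$ by this index: letting $n$ denote the number of edge clouds, define $X_i = \{ (i, F) : F \in \mathcal{F}_i \}$ for $i = 1, \dots, n$. I would then verify the two conditions required of a partition — the blocks are pairwise disjoint because any two of them consist of ordered pairs with distinct first coordinates, and their union is all of $X$ because each $(i, F) \in X$ lies in $X_i$ — noting that every block is nonempty since the empty placement belongs to each $\mathcal{F}_i$ by \eqref{equ_cache_feasible}.

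Next I would assign the capacities $k_i = 1$ for all $i$. With this choice, the collection $\{ A \subseteq X : |A \cap X_i| \leq 1, \ i = 1, \dots, n \}$ is, by the Definition of a simple partition matroid, the independent-set family of a simple partition matroid $M = (X, I)$ on the partition $X_1 \cup \dots \cup X_n$; in particular no exchange-property verification is needed, since the Definition already certifies that any $(X, I)$ of this form is a matroid. It then remains to match this matroid against the feasibility constraint \eqref{equ_dataplacement_reformed_cont}. A point of the feasible region of \eqref{equ_dataplacement_obj}--\eqref{equ_dataplacement_cont} is specified by choosing, for each edge cloud $i$, exactly one feasible profile $F_i \in \mathcal{F}_i$ (setting $fc_i^p = 1$ selects the $p$-th such profile, and $\sum_p fc_i^p = 1$ enforces ``exactly one''). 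I would identify such a choice with the set $S = \{ (i, F_i) : i = 1, \dots, n \} \subseteq X$; then $|S \cap X_i| = 1$ for every $i$, so $S \in I$, and in fact $S$ is a basis of $M$. Conversely, every basis of $M$ has exactly one element in each block and hence corresponds to a unique feasible assignment of profiles, with the two objective expressions agreeing term by term. So the constraint set of \eqref{equ_dataplacement_reformed_cont} is precisely the set of bases of the simple partition matroid $M$, which is what the lemma asserts.

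The only delicate point — and it is bookkeeping rather than a genuine obstacle — is reconciling the equality $\sum_p fc_i^p = 1$ with the inequality ``$|A \cap X_i| \leq 1$'' built into the matroid's independent sets; this is handled by observing that the feasible data placements correspond to the \emph{bases}, not to all independent sets, of $M$, and that restricting to bases is harmless here because one always commits to some profile (possibly the empty one) in every edge cloud. I would also take care to phrase the translation between the $0$--$1$ indicator vectors $(fc_i^p)$ and subsets of $X$ as an explicit bijection, so that the objective \eqref{equ_dataplacement_reformed} evaluated over bases of $M$ coincides with \eqref{equ_dataplacement_obj}; this ensures that Lemma~\ref{lemma_supermodular} and Lemma~\ref{lemma_matroid} together set up the standard ``minimize a supermodular function over a matroid'' framework needed for the subsequent approximation guarantee.
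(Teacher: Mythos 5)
Your construction is correct and is exactly the argument the paper intends: the paper supplies no proof of this lemma, deferring to Lemma 2 of \cite{Shao2019}, and the standard verification is precisely what you give --- partition $X$ by the edge-cloud index into blocks $X_i = \{(i,F): F \in \mathcal{F}_i\}$ with capacities $k_i = 1$, and observe that the feasible placements (one profile per cloud, enforced by $\sum_p fc_i^p = 1$) correspond to the bases of the resulting simple partition matroid, the relaxation from ``exactly one'' to ``at most one'' being harmless for the greedy guarantee. The only nitpick is your side remark that the empty placement belongs to every $\mathcal{F}_i$ by \eqref{equ_cache_feasible}: the paper's own example of $\mathcal{F}_i$ lists only nonempty profiles, but nothing in your argument actually depends on this, so the proof stands as written.
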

The proofs of Lemma \ref{lemma_supermodular} and \ref{lemma_matroid} are similar to the proofs of Lemma $3$ and $2$ of \cite{Shao2019}. Thus, we establish that the data-placement problem is a minimization of a monotone supermodular function subject to simple partition matroid constraints. Therefore, a greedy algorithm, like the one shown in Alg. \ref{alg_cachingdecision}, obtains a $1/2$-competitive solution. 

\begin{algorithm}[H]
\begin{algorithmic}[1]
\caption{Data-placement algorithm}
\label{alg_cachingdecision}
\REQUIRE \
    
    The aggregating data requirement $\{d_{i, o}(T) \}$
    
    The size of storage $\{ \mathcal{S}_i \}$ 
    
    The size of each data $\{ s_o \}$
            
\ENSURE \
    
    The data-placement decision $\mathbb{S}$ \
           
    Initialize $\mathbb{S} \leftarrow \emptyset$; $\mathbb{I} \leftarrow \{i\}$
    \WHILE{$\mathbb{I} \neq \emptyset$}
        \FOR{$i \in \mathbb{I}$}
            \STATE{$F_i^* \leftarrow \argmax f_{\mathbb{S}}(F_i)$}
        \ENDFOR
        \STATE{$F^* \leftarrow \max_{i \in \mathbb{I}} \{F_i^*\}$}
        \STATE{$\mathbb{S} \leftarrow {\mathbb{S} \cup F^*}$}
        \STATE{$\mathbb{I} \leftarrow \mathbb{I} - \{i: F_i^* = F^* \}$}
    \ENDWHILE
\end{algorithmic}
\end{algorithm}
Notice that we use $f_{\mathbb{S}}(F_i)$ to denote the marginal benefit of adding $F_i$ to the already-chosen set $\mathbb{S}$.

\section{Performance analysis}
\label{sec_performance}
In this section, we present a performance analysis of the proposed method. We first analyze the performance of the computing resource-allocation algorithm and the data-placement algorithm within a coarse-grained time slot and then present the long-term performance guarantee of the ECP's revenue. 

\subsection{The optimality within a coarse-grained time slot}
We begin analyzing online computing-resource allocation for a certain coarse-grained time slot $T$ with the following claim.
\begin{proposition}
\label{proposition_competitive}
    Alg. \ref{alg_primaldual} returns an integral solution and has the competitive ratio $1 - 1/e$.
\end{proposition}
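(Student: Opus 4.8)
The plan is to follow the classical primal-dual analysis for online packing LPs (in the style of Buchbinder–Naor), adapted to the ``configuration'' structure of this problem. The integrality claim is almost immediate: in Alg.~\ref{alg_primaldual} each request is either rejected (all $x_A^l = 0$) or assigned a single configuration $A^*$ (so $x_{A^*}^l = 1$ and all others $0$); hence constraint \eqref{equ_allocation_con} is satisfied integrally by construction, and no rounding is ever needed. The substance is the $1 - 1/e$ competitive ratio, which I would establish by the standard three-part argument: (i) dual feasibility is maintained throughout; (ii) each time a request is served, the increment in the primal objective is at least $(1 - 1/e)$ times the increment in the dual objective; (iii) the dual objective weakly upper-bounds the optimum of the primal LP \eqref{equ_obj_realtime}, so the produced primal value is at least $(1-1/e)$ times the LP optimum, hence at least $(1-1/e)$ times the integral optimum.

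First I would set up the bookkeeping. When request $l$ is served with configuration $A^*$, the primal objective rises by $L^l \widetilde{R}_{A^*}^l$, and the algorithm sets $\alpha^l$ to the (nonnegative, by the acceptance test) slack $L^l \widetilde{R}_{A^*}^l - \sum_{i,r,t,k} N_{A^*,i,k}^l g_{r,k}\beta_{i,r,t}$ evaluated \emph{after} the $\beta$ update, while each $\beta_{i,r,t}$ with $t \in t_l$ is multiplied by $\bigl(1 + \sum_k N_{A^*,i,k}^l g_{r,k}/c_{i,r,t}\bigr)$ and shifted by an additive term proportional to $\widetilde{R}_{A^*,i}^l/(K c_{i,r,t})$. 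The dual objective increment is $\Delta\alpha^l + \sum_{i,r,t} c_{i,r,t}\,\Delta\beta_{i,r,t}$; I would expand $\sum_{i,r,t} c_{i,r,t}\Delta\beta_{i,r,t} = \sum_{i,r,t}\bigl(\sum_k N_{A^*,i,k}^l g_{r,k}\bigr)\beta_{i,r,t} + \frac{1}{e-1}\sum_i \widetilde{R}_{A^*,i}^l$ (the $K$ cancels the sum over resource types $r$, assuming the intended reading of the update), and then combine with $\Delta\alpha^l$. The algebra should collapse to dual-increment $= \bigl(1 + \tfrac{1}{e-1}\bigr)\cdot(\text{primal-increment}) = \tfrac{e}{e-1} L^l\widetilde{R}_{A^*}^l$, i.e. primal-increment $/$ dual-increment $= 1 - 1/e$, which is step (ii).

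For dual feasibility (step (i)), I would argue by induction that after processing every request, $\alpha^l \geq L^l\widetilde{R}_A^l - \sum_{i,r,t,k} N_{A,i,k}^l g_{r,k}\beta_{i,r,t}$ holds for \emph{all} $A \in A^l$ and all $l$. For the current request this is forced by the choice $A^* = \argmax$ together with the explicit assignment of $\alpha^l$. For earlier requests, the $\beta$'s only increase, so their dual constraints stay satisfied; for later requests the $\beta$'s appearing are those at the time $l'$ is processed, so monotonicity again suffices. Nonnegativity of $\alpha^l$ is exactly the acceptance condition, and $\beta \geq 0$ is preserved since both update operations are nonnegative. Step (iii) is just weak LP duality applied to the pair \eqref{equ_obj_realtime}–\eqref{equ_alg_dual}.

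The main obstacle is step (ii): making the primal/dual increment ratio come out to exactly $1-1/e$ requires the multiplicative-plus-additive $\beta$ update to be \emph{exactly} the right one, and here there is a genuine subtlety because the ``profit'' coefficient $\widetilde{R}_A^l$ can be negative (transportation cost may outweigh $Vp_k$), whereas the textbook $1-1/e$ primal-dual packing result assumes nonnegative objective coefficients. I would handle this by restricting attention to configurations with $\widetilde{R}_{A,i}^l \geq 0$ on every used edge cloud $i$ — note an optimal solution never uses a component with negative contribution, and the $\argmax$/acceptance tests in Alg.~\ref{alg_primaldual} already discard negative-value allocations — so that on the relevant sub-instance all coefficients are nonnegative and the standard potential argument goes through. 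I would also need the mild normalization that a single VM never consumes more than the available capacity of any resource (so each multiplicative factor is at most $2$), which is what guarantees the exponential growth of $\beta$ reaches the rejection threshold $1$ after only a bounded relative amount of resource is committed; this is the usual ingredient that ties the additive $\tfrac{1}{e-1}$ constant to the base $e$ and closes the argument.
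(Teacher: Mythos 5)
Your proposal is correct and follows essentially the same route as the paper's own proof: dual feasibility (Lemma~\ref{lemma_dual}), a per-request primal-to-dual increment ratio of exactly $1-1/e$ (Lemma~\ref{lemma_primal}), and near-primal-feasibility via the exponential growth of $\beta_{i,r,t}$ up to the rejection threshold $\beta_{i,r,t}>1$ (Lemma~\ref{lemma_primaldualratio}), combined with weak LP duality. The normalizations you introduce (nonnegative per-cloud profit components and per-request demand small relative to $c_{i,r,t}$) correspond to the assumptions the paper invokes inside its Lemma~\ref{lemma_primaldualratio} proof by scaling $p_k$ and using $1+x\approx e^{x}$.
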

We prove this claim by showing the feasibility of the dual and primal problems with Lemma \ref{lemma_dual} and \ref{lemma_primal} and the primal-to-dual ratio with Lemma \ref{lemma_primaldualratio}.
\begin{lemma}
\label{lemma_dual}
    Alg. \ref{alg_primaldual} produces a feasible dual solution.
\end{lemma}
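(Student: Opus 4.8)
The plan is to verify the three families of dual constraints in \eqref{equ_alg_dual} directly from the update rules of Alg.~\ref{alg_primaldual}, taking the dual variables $\{\alpha^l\}$ and $\{\beta_{i,r,t}\}$ as they stand at the \emph{end} of the coarse-grained slot $T$. The two easy facts come first. Nonnegativity of $\beta_{i,r,t}$ follows by induction on the request index: each $\beta_{i,r,t}$ starts at $0$, and whenever it is touched it is replaced by $\beta_{i,r,t}\big(1+\sum_k N^l_{A^*,i,k}g_{r,k}/c_{i,r,t}\big) + \tfrac{1}{e-1}\widetilde R^l_{A^*,i}/(K c_{i,r,t})$, i.e.\ a multiple $\ge 1$ of its old value plus a nonnegative term (the update runs only for an \emph{accepted} request, and I will argue that the argmax-chosen $A^*$ never assigns VMs to an edge cloud whose component $\widetilde R^l_{A^*,i}$ is negative, since relocating those VMs — or, in the worst case, rejecting $l$ — would not decrease $L^l\widetilde R^l_A - \sum_{i,r,t,k} N^l_{A,i,k}g_{r,k}\beta_{i,r,t}$). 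The same formula shows each $\beta_{i,r,t}$ is \emph{nondecreasing} as the algorithm processes requests within the slot. Nonnegativity of $\alpha^l$ is then immediate: if $l$ is accepted, $\alpha^l$ is set to $L^l\widetilde R^l_{A^*}-\sum_{i,r,t,k} N^l_{A^*,i,k}g_{r,k}\beta_{i,r,t}$, which is $\ge 0$ precisely because the rejection test did not fire; if $l$ is rejected, $\alpha^l=0$.

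The heart of the argument is the constraint $\alpha^l \ge L^l\widetilde R^l_A - \sum_{i,r,t,k} N^l_{A,i,k}g_{r,k}\beta_{i,r,t}$ for every $A\in A^l$. Fix $l$ and let $\bar\beta$ denote the dual variables at the instant $l$ is handled. Case~1: $l$ is accepted. Then $\alpha^l = \max_{A\in A^l}\big(L^l\widetilde R^l_A - \sum_{i,r,t,k} N^l_{A,i,k}g_{r,k}\bar\beta_{i,r,t}\big)$ by the choice of $A^*$, so the constraint holds for every $A$ \emph{with respect to $\bar\beta$}; since $\widetilde R^l_A$ and the coefficients $N^l_{A,i,k}g_{r,k}$ are constants and every $\beta_{i,r,t}$ only grows afterwards, the right-hand side can only decrease between that instant and the end of the slot, so the constraint survives for the final $\beta$. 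Case~2: $l$ is rejected and $\alpha^l=0$; I must show $L^l\widetilde R^l_A \le \sum_{i,r,t,k} N^l_{A,i,k}g_{r,k}\bar\beta_{i,r,t}$ for all $A\in A^l$, after which monotonicity closes the gap as before. If the rejection was triggered because $\max_{A}\big(L^l\widetilde R^l_A-\sum_{i,r,t,k} N^l_{A,i,k}g_{r,k}\bar\beta_{i,r,t}\big)<0$, this is immediate. If it was triggered because some $\bar\beta_{i,r,t}$ relevant to $A^*$ exceeds $1$, one derives $L^l\widetilde R^l_A \le \sum_{i,r,t,k} N^l_{A,i,k}g_{r,k}\bar\beta_{i,r,t}$ from $\bar\beta_{i,r,t}>1$ together with the normalization that the per-slot reduced revenue $L^l\widetilde R^l_A$ of any configuration does not exceed its demand $\sum_k N^l_{A,i,k}g_{r,k}$ on that single saturated resource, the remaining $\beta$-terms being nonnegative.

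I expect this last sub-case — rejection triggered by a saturated dual variable — to be the main obstacle, because it is the one place where dual feasibility is not purely mechanical: it forces one to state and justify the scaling between the revenue rates $p_k$, service lengths $L^l$, per-VM demands $g_{r,k}$, and capacities $c_{i,r,t}$ under which ``$\beta_{i,r,t}>1$'' genuinely certifies that no configuration routing through that resource can be profitable. A secondary, smaller subtlety is the claim used above that $A^*$ never employs an edge cloud with a negative component $\widetilde R^l_{A^*,i}$, which I would settle by a local-exchange argument on the argmax — or, if VM-count constraints block the exchange, by noting that such an $A^*$ would be rejected anyway, so no $\beta$-update occurs. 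With Lemma~\ref{lemma_dual} established, it combines with Lemma~\ref{lemma_primal} and the primal-to-dual ratio of Lemma~\ref{lemma_primaldualratio} via weak LP duality to yield the $1-1/e$ bound of Proposition~\ref{proposition_competitive}.
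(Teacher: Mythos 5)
Your core argument is the same as the paper's: for an accepted request, $\alpha^l$ is set to $\max_{A\in A^l}\bigl(L^l\widetilde R^l_A-\sum_{i,r,t,k}N^l_{A,i,k}g_{r,k}\beta_{i,r,t}\bigr)$, which satisfies the constraint for every $A$ because $A^*$ is the argmax; for a request rejected because this maximum is negative, $\alpha^l=0$ suffices. You add two points the paper leaves implicit---that each $\beta_{i,r,t}$ is nondecreasing, so a constraint verified when $l$ is processed persists to the end of the slot, and nonnegativity of the dual variables---and both are worth having. (Your worry that a negative component $\widetilde R^l_{A^*,i}$ could spoil $\beta\ge 0$ is moot under the paper's scaling assumption, invoked in the proof of Lemma~\ref{lemma_primaldualratio}, that $p_k$ is scaled so that $\widetilde R^l_{A^*,i}\ge K\sum_k N^l_{A^*,i,k}g_{k,r}\ge 0$; the local-exchange argument is then unnecessary.)

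The step that is not closed is exactly the sub-case you flag: rejection triggered by $\beta_{i,r,t}>1$ while the maximal reduced value is still positive. Dual feasibility with $\alpha^l=0$ there requires $L^l\widetilde R^l_{A^*}\le\sum_{i,r,t,k}N^l_{A^*,i,k}g_{r,k}\beta_{i,r,t}$, and you obtain it only by postulating a normalization (reduced revenue of a configuration does not exceed its demand on the single saturated resource) that the paper never states; the paper's only scaling assumption points in the opposite direction (a lower bound on $\widetilde R^l_{A^*,i}$ in terms of $\sum_k N g$), and unlike the AdWords-type primal-dual setting, the value $L^l\widetilde R^l_A$ here aggregates over all clouds and resources, so one saturated $\beta$ does not automatically satisfy the constraint. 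As written, that step is asserted rather than proved. To be fair, the paper's own proof is silent on this sub-case altogether---it only treats ``maximum reduced value negative'' and ``accepted''---so your proposal is more explicit about where the real work lies; but neither your write-up nor the paper's establishes dual feasibility for requests rejected by the $\beta>1$ test without an additional, explicitly stated bound relating $L^l\widetilde R^l_A$ to the per-resource demands.
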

\begin{proof}
    Consider a dual constraint corresponding to request $l$. If 
\begin{align}
    L^l \widetilde{R}_{A^*}^l - \sum_{i, r, t, k} N_{A^*, i, k}^l g_{r, k} \beta_{i, r, t} < 0, \notag
\end{align} 
the dual constraint cannot be satisfied. Otherwise, the algorithm allocates the resources to $l$ as $A^*$. Setting 
\begin{align}
    \alpha_l = L^l \widetilde{R}_{A^*}^l -  \sum_{i, r, t, k} N_{A^*, i, k}^l g_{r, k} \beta_{i, r, t}
\end{align} guarantees that the constraint is satisfied for all $l$.
\end{proof}
\begin{lemma}
\label{lemma_primal}
Denoting $\delta^P$ and $\delta^D$ the changes in the primal and dual costs for iterating Algorithm \ref{alg_primaldual}, in each iteration, $\delta^P = (1 - 1/e) \delta^D$.
\end{lemma}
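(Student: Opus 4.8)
The plan is to process Alg.~\ref{alg_primaldual} one request at a time and to show that, for the request $l$ currently being served, the induced increment $\delta^P$ of the primal objective of \eqref{equ_obj_realtime} and the induced increment $\delta^D$ of the dual objective of \eqref{equ_alg_dual} always satisfy $\delta^P = (1-1/e)\,\delta^D$. Summing this per-request identity over the whole request sequence, together with weak duality and the dual feasibility established in Lemma~\ref{lemma_dual}, is what will later yield the $1-1/e$ ratio of Proposition~\ref{proposition_competitive}; here I only need the single-step identity. I would first dispose of the trivial case: if $l$ is rejected, no $x^l_A$ is raised and neither $\alpha^l$ nor any $\beta_{i,r,t}$ is touched, so $\delta^P=\delta^D=0$ and the identity is immediate.

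The substantive case is when $l$ is accepted with configuration $A^*$. Then $x^l_{A^*}$ is raised from $0$ to $1$, so $\delta^P = L^l\widetilde R^l_{A^*}$ directly from the objective of \eqref{equ_obj_realtime}. On the dual side $\delta^D = \alpha^l + \sum_{i,r,t} c_{i,r,t}\,\Delta\beta_{i,r,t}$, where $\Delta\beta_{i,r,t}$ is read off the multiplicative-additive update in Alg.~\ref{alg_primaldual}, giving $c_{i,r,t}\Delta\beta_{i,r,t} = \beta_{i,r,t}\sum_k N^l_{A^*,i,k} g_{r,k} + \frac{1}{e-1}\,\frac{\widetilde R^l_{A^*,i}}{K}$. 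I would then substitute $\alpha^l = L^l\widetilde R^l_{A^*} - \sum_{i,r,t,k} N^l_{A^*,i,k} g_{r,k}\,\beta_{i,r,t}$ and observe that the multiplicative part of the $\beta$-update is exactly what cancels the $-\sum_{i,r,t,k} N^l_{A^*,i,k} g_{r,k}\beta_{i,r,t}$ term in $\alpha^l$. What survives is $\delta^D = L^l\widetilde R^l_{A^*} + \frac{1}{e-1}\sum_{i,r,t}\widetilde R^l_{A^*,i}/K$, and the additive term of the update has been calibrated precisely so that, summed over the clouds and over the resource-slot pairs actually consumed by $A^*$, it collapses to $\frac{1}{e-1}L^l\widetilde R^l_{A^*}$ (using $\widetilde R^l_{A}=\sum_i \widetilde R^l_{A,i}$ from the definitions). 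Hence $\delta^D = \frac{e}{e-1}L^l\widetilde R^l_{A^*} = \frac{e}{e-1}\,\delta^P$, i.e. $\delta^P = (1-1/e)\,\delta^D$.

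The hard part will be this last bookkeeping step: making the normalization constants in the $\beta$-update — the factor $\frac{1}{e-1}$, the division by $K c_{i,r,t}$, and the precise index set $(i,r,t)$ at which $\beta$ is modified — fit together so that the residual additive contribution is \emph{exactly} $\frac{1}{e-1}L^l\widetilde R^l_{A^*}$ rather than merely bounded by it, and likewise confirming that the cancellation of the $\beta N g$ terms is exact. I expect this to force one of two readings of the pseudocode: either $\alpha^l$ is evaluated against the same $\beta$-values used to test the dual constraint for $l$ (the pre-update values), which makes the cancellation clean and the identity exact; or $\alpha^l$ uses the post-update $\beta$, in which case the extra cross terms generated by the multiplicative factor must be accounted for and one instead obtains $\delta^P \ge (1-1/e)\,\delta^D$, which already suffices for Proposition~\ref{proposition_competitive}. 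En route I would also record that acceptance forces $L^l\widetilde R^l_{A^*}\ge \sum_{i,r,t,k} N^l_{A^*,i,k} g_{r,k}\beta_{i,r,t}\ge 0$, so $\delta^P\ge 0$ and the identity relates nonnegative quantities, consistently with $\delta^D\ge 0$.
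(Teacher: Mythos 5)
Your proposal follows essentially the same route as the paper's proof: take $\delta^P = L^l\widetilde{R}^l_{A^*}$, split $\delta^D$ into the $\alpha^l$-contribution and the $\sum_{i,r,t}c_{i,r,t}\Delta\beta_{i,r,t}$-contribution, cancel the $\sum N^l_{A^*,i,k}g_{r,k}\beta_{i,r,t}$ terms against the multiplicative part of the update, and collapse the additive $\frac{1}{e-1}\widetilde{R}^l_{A^*,i}/(Kc_{i,r,t})$ terms via $\widetilde{R}^l_{A}=\sum_i\widetilde{R}^l_{A,i}$ to get $\delta^D=\bigl(1+\frac{1}{e-1}\bigr)\delta^P$. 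Your additional remarks (the rejected-request case, the pre- versus post-update reading of $\beta$ in setting $\alpha^l$, and the normalization bookkeeping over the $(r,t)$ index set) are points the paper passes over silently, and the pre-update reading you identify is indeed the one that makes the stated equality exact.
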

\begin{proof}
Whenever the algorithm updates the primal solutions, the change in the objective function is 
    \begin{align*}
        \delta^P = L^l \widetilde{R}_{A^*}^l.
    \end{align*} 
    
Remind that $A^*$ denotes the allocation for request $l$. Then the change in the dual objective function has two parts: the change corresponding to $\alpha^l$ and the change corresponding to $\beta_{i, r, k}$. Denoting the former as $\delta_\alpha^{D}$ and the latter as $\delta_\beta^{D}$, we have
    \begin{align*}
            \delta_\alpha^{D} & = L^l \widetilde{R}_{A^*}^l - \sum_{i, r, t, k} N_{A^*, i, k}^l g_{r, k} \beta_{i, r, t},
    \end{align*}
    and 
    \begin{align*}
            \delta_\beta^{D} = & \sum_{i, r, t} c_{i, r, t} 
            \left( \frac{\sum_k N_{A^*, i, k}^l g_{r, k} \beta_{i, r, k}}{c_{i, r, t}} 
            + \frac{1}{e - 1} \frac{ \widetilde{R}_{A^*, i}^l }{K c_{i, r, t}} \right).
    \end{align*}
    Then we have 
    \begin{align*}
            \delta^{D} & = \delta_\alpha^{D} + \delta_\beta^{D}  \\
                                 & = L^l \widetilde{R}_{A^*}^l + \sum_{i, r, t} \frac{1}{e - 1}
                                        \frac{ \widetilde{R}_{A^*, i}^l }{K} \\
                                 & = L^l \widetilde{R}_{A^*}^l + \frac{1}{e - 1} L^l \widetilde{R}_{A^*}^l \\
                                 & = \left (1 + \frac{1}{e - 1} \right ) \delta^{P}.
    \end{align*}
    Thus, the lemma is proved.
 \end{proof}
\begin{lemma}
\label{lemma_primaldualratio}
Algorithm \ref{alg_primaldual} produces an almost-feasible primal solution in which the resource constraint $c_{i, r, t}$ is violated by at most $\max_{i, r, t} \sum_k N^l_{A^*, i, k} g_{r, k}$.
\end{lemma}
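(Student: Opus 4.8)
The plan is to reduce the statement to a per-resource argument and exploit the explicit rejection test ``$\beta_{i,r,t}>1$'' built into Alg.~\ref{alg_primaldual}. Fix one instance of constraint \eqref{equ_resource_con}, i.e.\ a triple $(i,r,t)$, and process the admitted requests in arrival order. The intuition is that $\beta_{i,r,t}$ acts as a shadow price that Alg.~\ref{alg_primaldual} drives upward every time a VM consuming resource $r$ in edge cloud $i$ during slot $t$ is allocated, and that once this price exceeds $1$ no further such allocation is ever made. So I would argue: (i) the total amount of resource $r$ committed to requests processed \emph{before} $\beta_{i,r,t}$ first exceeds $1$ is at most $c_{i,r,t}$; (ii) at most one further request is admitted afterwards, namely the one whose allocation tips $\beta_{i,r,t}$ past $1$; and (iii) after that request, $\beta_{i,r,t}>1$ remains true forever (the variable is monotone nondecreasing), so every later request touching $(i,r,t)$ is rejected by the test. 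Combining (i)--(iii), the committed amount overshoots $c_{i,r,t}$ by at most the demand of that single tipping request, which is bounded by $\max_{i,r,t}\sum_k N^l_{A^*,i,k}g_{r,k}$, exactly the claimed slack.

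The technical core is (i). Let $l_1,l_2,\dots$ be the subsequence of admitted requests that place at least one VM consuming resource $r$ in edge cloud $i$ during slot $t$, let $\delta_m=\sum_k N^{l_m}_{A^*,i,k}g_{r,k}$ be the amount committed by $l_m$, and let $\Lambda_m=\sum_{p\le m}\delta_p$. From the initialization $\beta_{i,r,t}=0$ and the update rule, the value after $l_m$ is
\[
  \beta^{(m)}_{i,r,t}=\beta^{(m-1)}_{i,r,t}\Bigl(1+\tfrac{\delta_m}{c_{i,r,t}}\Bigr)+\tfrac{1}{e-1}\,\tfrac{\widetilde R^{l_m}_{A^*,i}}{K\,c_{i,r,t}} .
\]
I would establish by induction on $m$ an invariant of the form $\beta^{(m)}_{i,r,t}\ge \tfrac{1}{e-1}\bigl(\exp(\Lambda_m/c_{i,r,t})-1\bigr)$; its contrapositive reads $\beta^{(m)}_{i,r,t}\le 1\Rightarrow\Lambda_m\le c_{i,r,t}$, which is exactly (i) once applied to the index just before $\beta_{i,r,t}$ crosses $1$, where the rejection test guarantees $\beta_{i,r,t}\le 1$. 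The inductive step substitutes the hypothesis into the update and reduces to an elementary inequality relating $(1+\delta_m/c)\exp(\Lambda_{m-1}/c)$, the additive seed term, and $\exp(\Lambda_m/c)$; the coefficient $\tfrac{1}{e-1}$ and the $1/K$ normalization in the update are calibrated precisely so this inequality goes through. If $\beta_{i,r,t}$ never exceeds $1$, the same invariant gives $\Lambda_m\le c_{i,r,t}$ for all $m$ and there is no violation at all.

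The main obstacle, and the place where care is required, is the additive ``reward'' term $\tfrac{1}{e-1}\widetilde R^{l_m}_{A^*,i}/(K c_{i,r,t})$. Because $\beta_{i,r,t}$ starts at $0$, it is this term alone that seeds the exponential growth, so one must (a) argue it is strictly positive for the first admitted request touching $(i,r,t)$ and nonnegative thereafter --- which is what makes $\beta_{i,r,t}$ monotone and underlies step (iii) --- and (b) check it is large enough relative to $\delta_m/c_{i,r,t}$ for the exponential invariant to close. Point (a) is where a mild structural hypothesis enters: an admitted request only has positive \emph{total} marginal value $\widetilde R^{l}_{A^*}=\sum_i\widetilde R^{l}_{A^*,i}$, so to conclude that each per-cloud component $\widetilde R^{l}_{A^*,i}$ actually used by $A^*$ is nonnegative one invokes, e.g., that the trade-off parameter $V$ is chosen large enough that $Vp_k$ dominates the per-VM transport term $\sum_{o}w_{i,j^*}s_o/L^l$ (so that a configuration placing VMs in a negative-component cloud is dominated and need not be selected). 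The rest is routine bookkeeping: each request may load several triples $(i,r,t)$ simultaneously, but the per-resource invariant above is maintained independently for each of them, and the configuration actually committed is the argmax $A^*$, so applying the bound to every $(i,r,t)$ yields the stated uniform violation guarantee.
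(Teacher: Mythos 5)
Your proposal follows essentially the same route as the paper's proof: an induction showing that $\beta_{i,r,t}$ grows at least like $\frac{1}{e-1}\bigl(e^{\Lambda/c_{i,r,t}}-1\bigr)$ in the cumulative committed load $\Lambda$, so that the rejection test $\beta_{i,r,t}>1$ fires once $\Lambda\ge c_{i,r,t}$ and the overshoot is at most the single tipping request's demand $\max_{i,r,t}\sum_k N^l_{A^*,i,k}g_{r,k}$; your closing conditions (the seed term dominating $\delta_m/c_{i,r,t}$, and nonnegativity of $\widetilde R^l_{A^*,i}$) correspond to the paper's assumption that $p_k$ is scaled so $\widetilde R^l_{A^*,i}\ge K\sum_k N^l_{A^*,i,k}g_{k,r}$ and its use of $1+x\approx e^x$. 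The argument is correct at the same level of rigor as the paper's, and your invariant even supplies the $/c_{i,r,t}$ normalization that the paper's displayed bound omits.
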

\begin{proof}
Consider a primal constraint $\sum_{l: t \in t^l} \sum_{A \in A^l} N_{A, i, k}^l g_{r, k} x_A^l \leq c_{i, r, t}$. Whenever we increase some $x_A^l$ by $1$, we increase $\beta_{i, r, t}$ by some factor. The value of $\beta_{i, r, t}$ behaves like a geometric sequence. Formally, we claim that
\begin{align}
    \beta_{i, r, t} \geq \frac{ e^{\sum_l \sum_k N_{A^*, k, t}^l \ g_{k, r}} - 1 }{e - 1}. \notag
\end{align}
This can be proved by induction. Note that $\beta_{i, r, t} = 0$ initially, so the statement is trivially true. We assume that the inequality is satisfied up to request $l - 1$, that is,
\begin{align*}
    \beta_{i, r, t}^{<bef>} \geq \frac{ e^{\sum_{s = 1}^{l - 1} \sum_k N_{A^*, k, t}^s \ g_{k, r}} - 1}{e - 1}. 
\end{align*}
After the arrival of $l$,
\begin{align*}
\beta_{i, r, t}^{<aft>} & \\
                               = & \beta_{i, r, t}^{<bef>}
 \left(1 + \frac{\sum_k N_{A^*, i, k}^l \ g_{k, r}}{c_{i, r, t}} \right) 
                                            + \frac{1}{e - 1} \frac{\widetilde{R}_{A^*, i}^l}{K c_{i, r, t}} \\
                                \geq & \left( \frac{e^{\sum_{s = 1}^{l - 1} \sum_k N_{A^*, k, t}^s \ g_{k, r}} - 1}{e - 1} \right)
\left(1 + \frac{\sum_k N_{A^*, i, k}^l \ g_{k, r}}{c_{i, r, t}} \right) \\
                                          &  + \frac{1}{e - 1} \frac{\widetilde{R}_{A^*, i}^l}{K c_{i, r, t}} \\
                                 = & \left( \frac{e^{\sum_{s = 1}^{l - 1} \sum_k N_{A^*, k, t}^s \ g_{k, r}}}{e - 1} \right)
                                            \left( 1 + \frac{\sum_k N_{A^*, i, k}^l \ g_{k, r}}{c_{i, r, t}} \right) \\
 & - \frac{1}{e - 1} \left(1 + \frac{\sum_k N_{A^*, i, k}^l \ g_{k, r}}{c_{i, r, t}} \right)      
                                          + \frac{1}{e - 1} \frac{\widetilde{R}_{A^*, i}^l}{K c_{i, r, t}} \\
                                  \geq & \left( \frac{e^{\sum_{s = 1}^{l - 1} \sum_k N_{A^*, k, t}^s \ g_{k, r}}}{e - 1} \right)
                                             e^{\sum_k N_{A^*, k, t}^l \ g_{k, r}} - \frac{1}{e - 1} \\
                                          & + \frac{1}{e - 1} \frac{\widetilde{R}_{A^*, i}^l}{K c_{i, r, t}} -
                                          \frac {\sum_{k} N_{A^*, i, k}^l g_{k, r}}{(e - 1)c_{i, r, t}} \\
                                  \geq & \frac{ e^{\sum_{s = 1}^{l} \sum_k N_{A^*, k, t}^s \ g_{k, r}} - 1}{e - 1}.            
\end{align*}
The first inequality follows the updating rules of $\beta_{i, r, t}$ of Alg. \ref{alg_primaldual}. The second inequality follows the fact that 
\begin{align}
    1 + \frac{\sum_k N_{A^*, i, k}^l \ g_{k, r}}{c_{i, r, t}} \approx e^{\frac{\sum_k N_{A^*, i, k}^l \ g_{k, r}} {c_{i, r, t}}}, \notag
\end{align}
given that $\sum_k N_{A^*, i, k}^l \ g_{k, r} / c_{i, r, t}$ approaches $0$. The third inequality is achieved by scaling $p_k$ so that $\widetilde{R}_{A^*, i}^l \geq K \sum_k N_{A^*, i, k}^l \ g_{k, r}$. 

Observe that $\sum_{s = 1}^{l} \sum_k N_{A^*, k, t}^s \ g_{k, r} \geq c_{i, r, t}$ implies that $\beta_{i, r, t} \geq 1$, in which case we reject the request, so the resource constraint $c_{i, r, t}$ is violated by at most $\max_{i, r, t} \sum_k N^l_{A^*, i, k} g_{r, k}$, which is far less than $c_{i, r, t}$.
\end{proof}

\subsection{Optimality of public data placement} 
\begin{proposition}
\label{prop_data_placement}
The proposed public data-placement algorithm (i.e., Alg. \ref{alg_cachingdecision}) achieves a competitive ratio of $1/2$.
\end{proposition}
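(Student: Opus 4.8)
The plan is to reduce the statement to the classical guarantee for greedy maximization of a monotone submodular function over a matroid, using the structural facts already established. By Lemma~\ref{lemma_supermodular} and Lemma~\ref{lemma_matroid}, the data-placement problem \eqref{equ_dataplacement_reformed}--\eqref{equ_dataplacement_reformed_cont} asks us to select a base of the simple partition matroid $M=(X,I)$ (one feasible profile $F_i\in\mathcal{F}_i$ per edge cloud $i$) so as to minimize the monotone supermodular transportation cost $f$. First I would pass to the complementary \emph{benefit} function $g(\mathbb{S}):=f(\emptyset)-f(\mathbb{S})$, i.e., the total transportation cost saved by placement $\mathbb{S}$ relative to caching nothing. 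Since $-f$ is submodular and $f$ is monotone, $g$ is normalized ($g(\emptyset)=0$), nonnegative, monotone nondecreasing and submodular, and minimizing $f$ over the bases of $M$ is the same as maximizing $g$ over them. The next observation is that Alg.~\ref{alg_cachingdecision}, which repeatedly picks the feasible profile with the largest marginal benefit $g_{\mathbb{S}}(F_i)$ among the not-yet-frozen clouds and then freezes that cloud, is exactly the standard matroid-greedy for $\max\{g(\mathbb{S}):\mathbb{S}\in I\}$; monotonicity of $g$ guarantees its output $\mathbb{S}^g$ is a base. The desired $1/2$ ratio then follows from the Fisher--Nemhauser--Wolsey bound (see also \cite{Calinescu2011}); below I sketch the self-contained argument I would include.

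Let $\mathbb{S}^g=\{(i,F_i^g)\}_i$ be the greedy base and $\mathbb{S}^*=\{(i,F_i^*)\}_i$ an optimal base. Because $M$ is a \emph{simple partition} matroid, both bases contain exactly one element in each part $X_i$, so $(i,F_i^*)\mapsto(i,F_i^g)$ is a partition-respecting bijection between $\mathbb{S}^*$ and $\mathbb{S}^g$. Let $t_i$ be the iteration at which Alg.~\ref{alg_cachingdecision} freezes cloud $i$, and let $\mathbb{S}^g_{<t_i}$ be the set chosen before that iteration. Two inequalities drive the proof: (i) greedy optimality -- at iteration $t_i$ the pair $(i,F_i^*)$ is a legal candidate, so $g_{\mathbb{S}^g_{<t_i}}(F_i^g)\ge g_{\mathbb{S}^g_{<t_i}}(F_i^*)$; and (ii) submodularity -- since $\mathbb{S}^g_{<t_i}\subseteq\mathbb{S}^g$, $g_{\mathbb{S}^g_{<t_i}}(F_i^*)\ge g_{\mathbb{S}^g}(F_i^*)$. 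Summing (i) over all clouds and telescoping gives $g(\mathbb{S}^g)=\sum_i g_{\mathbb{S}^g_{<t_i}}(F_i^g)\ge\sum_i g_{\mathbb{S}^g}(F_i^*)$. On the other hand, monotonicity gives $g(\mathbb{S}^*)\le g(\mathbb{S}^*\cup\mathbb{S}^g)$, and expanding $g(\mathbb{S}^*\cup\mathbb{S}^g)$ by adding the elements of $\mathbb{S}^*$ onto $\mathbb{S}^g$ one at a time, bounding each marginal gain from above by its marginal w.r.t.\ $\mathbb{S}^g$ (submodularity), yields $g(\mathbb{S}^*\cup\mathbb{S}^g)\le g(\mathbb{S}^g)+\sum_i g_{\mathbb{S}^g}(F_i^*)$. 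Combining the two displays, $g(\mathbb{S}^*)\le 2\,g(\mathbb{S}^g)$, so $g(\mathbb{S}^g)\ge\tfrac12\,g(\mathbb{S}^*)=\tfrac12\max_{\mathbb{S}\in I}g(\mathbb{S})$; that is, Alg.~\ref{alg_cachingdecision} captures at least half of the maximum achievable transportation-cost reduction, which is the claimed $1/2$-competitiveness.

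The step I expect to need the most care is the bookkeeping of the reduction rather than the exchange argument: one must confirm that the sign convention of Lemma~\ref{lemma_supermodular} (supermodular \emph{cost}) really produces a \emph{monotone nondecreasing submodular} benefit $g$, and that $g$ is normalized and nonnegative, so that ``captures half of the optimum'' is the correct reading of ``competitive ratio $1/2$'' (a multiplicative guarantee $f(\mathbb{S}^g)\le 2f(\mathbb{S}^*)$ would not hold in general). One must also verify that Alg.~\ref{alg_cachingdecision}'s rule -- argmax over all active clouds, then freeze only the winning cloud -- coincides with matroid-greedy on $M$, in particular that freezing a part exactly enforces the simple-partition-matroid constraint and that a legal candidate always exists until every part is frozen. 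Once these points are pinned down, the $1/2$ bound is immediate from \cite{Calinescu2011}, and the argument above parallels the treatment in \cite{Shao2019} already invoked for Lemmas~\ref{lemma_supermodular} and~\ref{lemma_matroid}.
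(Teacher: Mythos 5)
Your proposal is correct and follows essentially the same route as the paper: the paper justifies the $1/2$ ratio by appealing to the classical guarantee for greedily optimizing a monotone submodular set function over a matroid (via Lemmas~\ref{lemma_supermodular} and~\ref{lemma_matroid}), which is exactly what you do, after making explicit the cost-to-savings reduction $g(\mathbb{S})=f(\emptyset)-f(\mathbb{S})$ and the Fisher--Nemhauser--Wolsey exchange argument. Your added care about the sign convention (the guarantee is multiplicative in the transportation-cost \emph{savings}, not in the cost itself) is a detail the paper leaves implicit, but it does not change the argument.
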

This emerges from the proposed algorithm greedily optimizing a submodular monotone set function over constraints associated with a matroid.

\textbf{Optimality against the $N$-slot Look-ahead Mechanism}
Because the request arrival rates and the data popularities are arbitrary, finding the global optimal solutions is difficult. Instead of comparing with the optimal result directly, we introduce an $N$-slot look-ahead mechanism as an approximation of the global optimal revenue. The $N$-slot look-ahead mechanism has the same objective function as \eqref{equ_obj} but assumes that the request sequence and data popularity changes in $N$ coarse-grained time slots are known in advance. In particular, in the $N$-slot look-ahead mechanism, time is divided into frames, each frame consisting of $N$ coarse-grained time slots. Suppose the $z$th time frame consists of the following coarse-grained time slots: $\{ zN, zN + 2, \dots, zN + N - 1\}$. In each time frame, the following problem must be solved:
\begin{align}
\label{equ_lookahead}
    & \max \frac{1}{N} \sum_{T = zN}^{zN + N - 1} R(T)  \\
    s.t. & \quad \eqref{equ_allocation_con},  \quad \eqref{equ_resource_con}, \quad and \notag \\
    & \frac{1}{N} \sum_{T = zN}^{zN + N - 1} C(T) \leq \mathcal{C}, \notag
\end{align}
where $z = 0, 1, \cdots$. The $N$-slot look-ahead mechanism assumes that all the request arrivals in the coming $N$ coarse-grained time slots are known in advance and that the data popularity estimation is perfect, thus approximating the global optimal solution. We show that the time-average revenue $\overline{R(T)}$ yielded by the proposed method has a constant gap from the result of the $N$-slot look-ahead mechanism.
\begin{theorem}
\label{thm}
Let $\widehat{R_N(z)}$ denote the optimal objective function value with the $N$-slot look-ahead problem in the $z$th time frame. Suppose a period of $ZN$ coarse-grained time slots, where $Z$ is a constant. We have
\begin{align}
    \frac{1}{ZN} \sum_{T = 0}^{ZN - 1} R(T) \geq 
 \left(1 - \frac{1}{e} \right) \left( \frac{1}{Z} \sum_{z = 0}^{Z - 1} \widehat{R_N(z)} - \frac{BN}{V} \right).
\end{align}
\end{theorem}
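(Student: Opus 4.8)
The plan is to feed the one-slot drift bound \eqref{equ_lowerbound} into a telescoping sum over the $ZN$ coarse-grained slots, to replace the unknown per-slot performance of the online scheme by $(1-1/e)$ times the performance of the $N$-slot look-ahead plan restricted to each slot, and finally to use the frame-wise transportation-cost feasibility of that plan to dispose of the cost terms that remain.

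First I would establish the per-slot comparison. Fix a coarse-grained slot $T$. Within $T$, Algorithm~\ref{alg_primaldual} (run with the placement $\{\mathbb{S}_i\}$ produced by Algorithm~\ref{alg_cachingdecision}) approximately maximizes, over the per-slot feasible set defined by \eqref{equ_resource_con} and \eqref{equ_allocation_con}, the quantity $VR(T)-Q(T)\bigl(C(T)-\mathcal{C}\bigr)$; this is the content of the reformulation \eqref{equ_obj_realtime} together with the lower bound \eqref{equ_lowerbound}. Let $R^{\diamond}(T)$ and $C^{\diamond}(T)$ denote the revenue and transportation cost that the optimal $N$-slot look-ahead plan incurs in slot $T$. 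Since that plan satisfies \eqref{equ_resource_con} and \eqref{equ_allocation_con} in every slot, its slot-$T$ allocation is a feasible competitor for the per-slot program, so Proposition~\ref{proposition_competitive} gives
\[
 VR(T)-Q(T)\bigl(C(T)-\mathcal{C}\bigr)\ \ge\ \Bigl(1-\tfrac1e\Bigr)\Bigl(VR^{\diamond}(T)-Q(T)\bigl(C^{\diamond}(T)-\mathcal{C}\bigr)\Bigr).
\]
Combining this with \eqref{equ_lowerbound} and rearranging yields, for every $T$,
\[
 \Delta_1(Q(T))-VR(T)\ \le\ B-\Bigl(1-\tfrac1e\Bigr)VR^{\diamond}(T)+\Bigl(1-\tfrac1e\Bigr)Q(T)\bigl(C^{\diamond}(T)-\mathcal{C}\bigr).
\]

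Next I would sum this over $T=0,1,\dots,ZN-1$. The left-hand drift telescopes to $L(Q(ZN))-L(Q(0))\ge -L(Q(0))$, which is $0$ for an empty initial queue. What remains is to bound $\bigl(1-\tfrac1e\bigr)\sum_{T}Q(T)\bigl(C^{\diamond}(T)-\mathcal{C}\bigr)$, and here two facts do the work. (i) The look-ahead plan obeys $\tfrac1N\sum_{T=zN}^{zN+N-1}C^{\diamond}(T)\le\mathcal{C}$ in each frame $z$, so $\sum_{T=zN}^{zN+N-1}\bigl(C^{\diamond}(T)-\mathcal{C}\bigr)\le 0$. (ii) Over a frame the backlog varies by at most $N\max\{C^{max},\mathcal{C}\}$, because \eqref{equ_queueupdate} gives $\lvert Q(T+1)-Q(T)\rvert\le\max\{C^{max},\mathcal{C}\}$. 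Writing $Q(T)=Q(zN)+\rho_T$ with $\lvert\rho_T\rvert\le N\max\{C^{max},\mathcal{C}\}$ for $T$ in frame $z$, the $Q(zN)$-weighted part of each frame's contribution is $\le 0$ by (i) (since $Q(zN)\ge 0$), while the $\rho_T$-weighted part is $O(N^2 B)$ per frame by (ii) and the definition of $B$. Summing over the $Z$ frames, rearranging, dividing by $ZNV$, and invoking the identity $\tfrac1Z\sum_{z=0}^{Z-1}\widehat{R_N(z)}=\tfrac1{ZN}\sum_{T=0}^{ZN-1}R^{\diamond}(T)$ (the $N$-slot look-ahead optimum being the per-frame average revenue) delivers the claimed inequality, with the residual $O(NB/V)$ slack recorded as the $BN/V$ term.

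The principal obstacle is step~(ii): the virtual-queue backlog $Q(T)$ multiplies the competitor's per-slot cost and has no a priori bound, so it cannot simply be pulled out of the frame sum; controlling its fluctuation across a frame by $O(N)$ is precisely what closes the argument and is responsible for the factor $N$ in the penalty. Two secondary points need care: that the $(1-1/e)$ guarantee of Proposition~\ref{proposition_competitive} genuinely applies to the combined ``revenue minus $Q(T)$-weighted cost'' objective of slot $T$ (it does, since the look-ahead allocation is feasible for that slot's primal program, the only caveat being that this allocation is evaluated under the placement the online scheme currently holds rather than the look-ahead's clairvoyant placement, a discrepancy absorbed via Proposition~\ref{prop_data_placement} and the slow variation of data popularity); and that the asymptotically negligible resource over-allocation of Lemma~\ref{lemma_primaldualratio} does not affect the time-average statement.
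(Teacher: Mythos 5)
Your proposal is correct and follows essentially the same route as the paper's proof: the per-slot $(1-1/e)$-competitive drift-plus-penalty comparison against the look-ahead plan's slot allocation, replacement of $Q(T)$ by $Q(zN)$ via the $O(N)$ within-frame queue variation, elimination of the $Q(zN)$-weighted cost term through the frame-wise constraint $\frac{1}{N}\sum_{T=zN}^{zN+N-1}C^{\diamond}(T)\le\mathcal{C}$, telescoping the drift, and dividing by $VNZ$. The only (cosmetic) difference is that you use the uniform bound $\lvert\rho_T\rvert\le N\max\{C^{max},\mathcal{C}\}$ where the paper uses the slot-indexed bound of order $n$, which changes the per-frame residual only by a constant factor and does not affect the $BN/V$ form of the gap.
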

\begin{proof}
Remember that for the $1$-slot drift, we have
\begin{align}
    \Delta_1(T) = L(Q(T + 1)) - L(Q(T)) \leq Q(T)(C(T) - \mathbf(C)) + B, \notag
\end{align}
where $B = \frac{\max \{{(C^{max})}^2, \mathcal{C}^2\}}{2}$. We multiply both sides by $-1$ and add $V R(T)$ to each side:
\begin{align}
     & V R(T) - \Delta_1(T) \\ \notag
     \geq & V R(T) - Q(T)(C(T) - \mathbf(C)) - B \\ \notag
     \geq & \left(1 - \frac{1}{e} \right) \left( V R^*(T) - Q(T)(C^*(T) - \mathbf(C)) - B \right), \notag
\end{align}
where $R^*(T)$ is the revenue and $C^*(T)$ is the data-transportation cost from coarse-grained time slot $T$ from any alternative resource-allocation decision. The last inequality emerges from the proposed method trying to maximize $V R(T) - Q(T)(C(T) - \mathbf(C)) - B$ for every $T$ with the primal-dual online algorithm, yielding a $(1 - \frac{1}{e})$-competitive solution (Proposition \ref{proposition_competitive}). 

Now we consider a process starting from coarse-grained time slot $zN + n$, where $n = 0, \cdots, N - 1$. We have
\begin{align}
    \label{equ_wow}
     & V R(zN + n) - \Delta_1(zN + n) \\ 
     \geq & \left(1 - \frac{1}{e} \right) \\ \notag
     & \left( V R^*(zN + n) - Q(zN + n)(C^*(zN + n) - \mathcal{C}) - B \right). \notag
\end{align}
Considering that 
\begin{align}
    \left| Q(zN + n) - Q(zN) \right| \leq n \max\{{(C^{max})}^2, \mathcal{C}^2\} = 2nB,
\end{align}
we can substitute it into \eqref{equ_wow}, and it follows that
\begin{align}
     & V R(zN + n) - \Delta_1(zN + n) \\ \notag
     \geq & \left(1 - \frac{1}{e} \right) \\ \notag
     & \left( V R^*(zN + n) - Q(zN)(C^*(zN + n) - \mathcal{C}) - B - 2nB \right). \notag
\end{align}
We sum up the above inequation with $n = 0, \cdots, N - 1$ and denote the $N$-slot drift as 
\begin{align}
    \Delta_N(zN) = L(Q(zN + N)) - L(Q(zN)), \notag
\end{align}
We have
\begin{align}
     & V \sum_{T = zN}^{zN + N -1} R(T) - \Delta_N(zN) \\ \notag
     \geq & \left(1 - \frac{1}{e} \right) \\ \notag
     & \left( V \sum_{T = zN}^{zN + N -1}R^*(T) - Q(zN)\sum_{T = zN}^{zN + N -1}(C^*(T) - \mathcal{C}) \right. \\ \notag
     & \left. - \sum_{T = zN}^{zN + N -1} (B + 2nB) \right) \\ \notag
     = & \left(1 - \frac{1}{e} \right) \\ \notag
     & \left( V \sum_{T = zN}^{zN + N -1}R^*(T) - Q(zN)\sum_{T = zN}^{zN + N -1}(C^*(T) - \mathcal{C}) - BN^2 \right), \notag 
\end{align}
The last equation emerges from 
\begin{align}
    \sum_{zN}^{zN + N - 1} n = \frac{N(N + 1)}{2}. \notag
\end{align}
We can replace $\sum_{T = zN}^{zN + N -1}R^*(T)$ with $\widehat{R_N(z)}$ in the above equations:
\begin{align}
     & V \sum_{T = zN}^{zN + N -1} R(T) - \Delta_N(zN) \\ \notag
     \geq & \left(1 - \frac{1}{e} \right) \left(VN \widehat{R_N(z)} - BN^2) \right). \notag
\end{align}
Notice that in the above inequation, we eliminate the term $Q(zn)\sum_{T = zN}^{zN + N -1}(C^*(T) - \mathcal{C})$ because, with the $N$-slot look-ahead mechanism, $Q(zN)\sum_{T = zN}^{zN + N -1}(C^*(T) - \mathcal{C}) \leq 0$ deterministically. 
We sum up the above equation over $z \in \{0, \cdots, Z - 1\}$ as follows:
\begin{align}
    V \sum_{T = 0}^{ZN - 1} R(T) - (L(Q(N)) - L (Q(0))) \\ \notag
    \geq \left(1 - \frac{1}{e} \right) \left( VN \sum_{z = 0}^{Z - 1} \widehat{R_N(z)} - BZN^2 \right).
\end{align}
Dividing both sides of the above inequation by $VNZ$, we prove the theorem.
\end{proof}

\section{Performance evaluation}
\label{sec_evaluation}
\textbf{Basic settings.} To evaluate the proposed method, we developed a discrete-event simulator with MATLAB. In this section, we present the evaluation results with simulations. We consider an ECP of moderate scale that operates $5$ edge clouds. Each cloud has three kinds of resources (e.g., CPU, memory, storage, etc.) for mobile-task computing. To keep the research general, we do not specify the resources. We assume the resources are divisible and the initial amount of each resource is $5,000$ in each location. The resources can be used to resemble two types of VMs. Type-$1$ VMs need $10$, $20$, and $30$ of each resource, while type-$2$ VMs need $30$, $20$, and $10$. The price of a type-$1$ VM is $10$ per unit time, and the price of a type-$2$ VM is $20$. Besides the resources needed to assemble VMs, we also assume that there are caching devices in each edge cloud that can transparently cache popular data. The size of each caching device is the same in each edge datacenter. Without specific illustration, the total cache size is $40\%$ the size of the universal content. In this research, we consider not only the data that can be obtained from the Internet (public data) but also the data uploaded by mobile users (private data). Without specific illustration, we set the total amount of private data as twice the amount of public data. The edge clouds are interconnected with high-speed networks. The end-to-end latency of fetching data from the local cache, from a neighboring cache, and from the remote cloud is randomly assigned following uniform distributions in the ranges $[5, 10]$(ms), $[20, 50]$(ms), and $[100, 200](ms)$, respectively \cite{Tran2019-1}. We discretize time with coarse-grained time slots and fine-grained time slots, with each coarse-grained time slot containing $500$ fine-grained time slots. The lifetime of each VM is assumed to be uniformly distributed in the range $[1, 5]$ of fine-grained time slots. The data popularity follows a Zipf distribution with an exponent of $0.6$ \cite{Tran2019-1}.  In the rest of this section, we compare the proposed method with some benchmark methods and discuss the impact of the parameters. 

\textbf{Evaluation with dynamics of VM request.}
In this simulation, we let requests follow Poisson arrivals; moreover, the expected arrival speed $\lambda$ varies randomly from $0$ to $50$ every $25$ fine-grained time slots. Type-$1$ and type-$2$ VMs are requested with equal probability. Fig. \ref{fig_exp1_arrival} shows the dynamics of the requests. 
\begin{figure}[htbp]
\centerline{\includegraphics[scale=0.7,bb=0 0 216 151]{./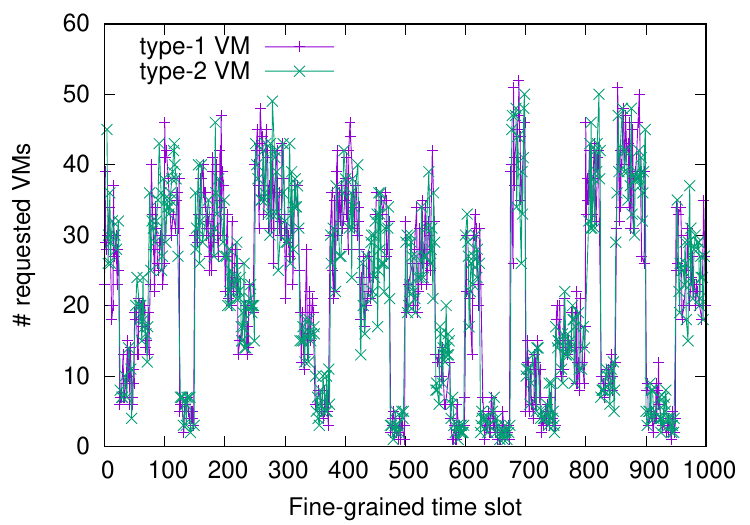}}
\caption{Dynamics of request arrivals}
\label{fig_exp1_arrival}
\end{figure}
During the simulation, we kept the data popularity fixed. We implemented two methods as benchmarks: myopic resource allocation with cooperative caching (MyopicCoop) and myopic resource allocation with non-cooperative caching (MyopicNoCoop). Myopic resource allocation refers to the method that, when an ECP receives a request, it assembles the required VMs with the minimum transportation cost. With cooperative caching, the ECP makes caching decisions with Alg. \ref{alg_cachingdecision}, while with non-cooperative caching, each caching device caches the data with the highest popularities independently. We set the time-average transportation cost constraint $L = 35,000$ and the drift-plus-penalty related parameter $V = 100,000$. Fig. \ref{fig_exp1_revenue_sequence} compares the time-average revenues of the proposed method and the benchmark methods.
\begin{figure}[htbp]
\centerline{\includegraphics[scale=0.7,bb=0 0 216 151]{./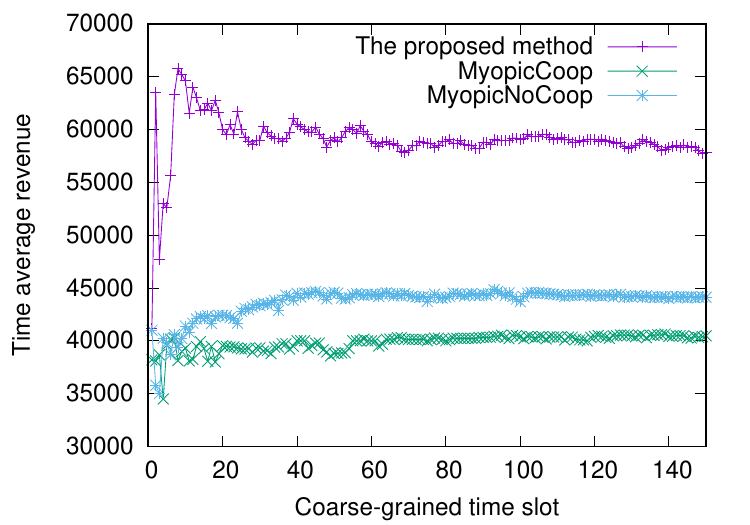}}
\caption{Time-average revenue comparison of myopic methods with dynamic requests}
\label{fig_exp1_revenue_sequence}
\end{figure}
Fig. \ref{fig_exp1_cost_sequence} shows the transportation costs of the three methods. 
\begin{figure}[htbp]
\centerline{\includegraphics[scale=0.7,bb=0 0 216 151]{./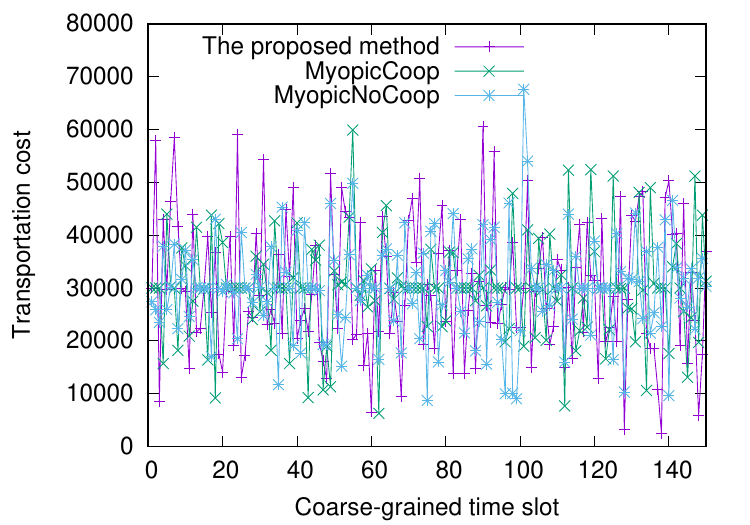}}
\caption{Transportation cost under dynamic requests}
\label{fig_exp1_cost_sequence}
\end{figure}
We can see that the proposed method performs much better than the myopic methods while satisfying the long-term transportation cost constraint. Fig. \ref{fig_exp1_cost_sequence} partly illustrates the advantage of the proposed method: taking better advantage of the time-average expression of the transportation cost constraint. In some coarse-grained time slots, even if the transportation cost greatly exceeds $L$, the requests can still be accepted. By doing so, in the successive time slot, the relative weight of the transportation cost increases relative to revenue. Intuitively, the proposed method has a larger feasible region given the time-average expression of the transportation cost constraint than the sample myopic methods. To study the stability of the system, we also checked the length of the virtual queue during the simulation period, as shown in Fig. \ref{fig_exp1_queue} shows, which approaches $0$ over time. 
\begin{figure}[htbp]
\centerline{\includegraphics[scale=0.7,bb=0 0 216 151]{./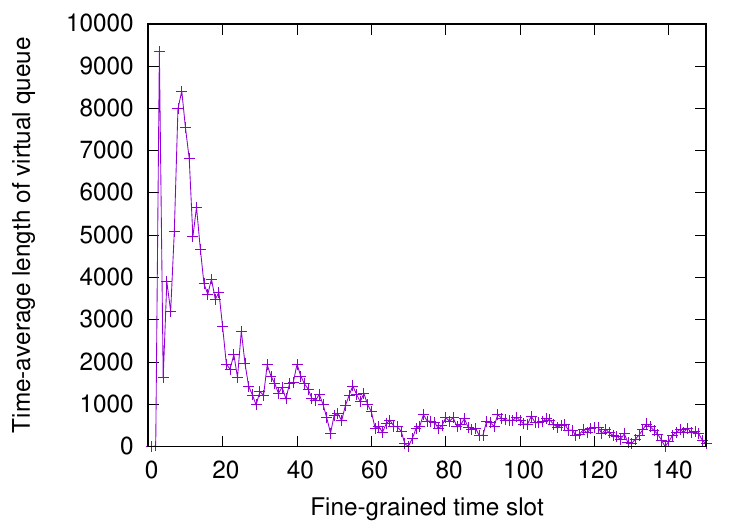}}
\caption{Time-average length of the virtual queue}
\label{fig_exp1_queue}
\end{figure}

\textbf{Evaluation with dynamics of data popularity.}
In this research, the dynamics depend not only on VM requests but also on data popularity. The data popularity changes with time, and as far as we know, no effective method exists to formally define and measure the popularity change. In this experiment, we propose a method that avoids directly measuring the popularity change of each data; instead, we introduce the concept ``popularity estimation error rate.'' The estimation error rate is a Poisson random number with an average value of $0.3$, which means that the probability of the data contributing to the top $50\%$ of the traffic is not cached. Fig. \ref{fig_exp2_estimation_error} shows the dynamics of the estimation error.
\begin{figure}[htbp]
\centerline{\includegraphics[scale=0.7,bb=0 0 216 151]{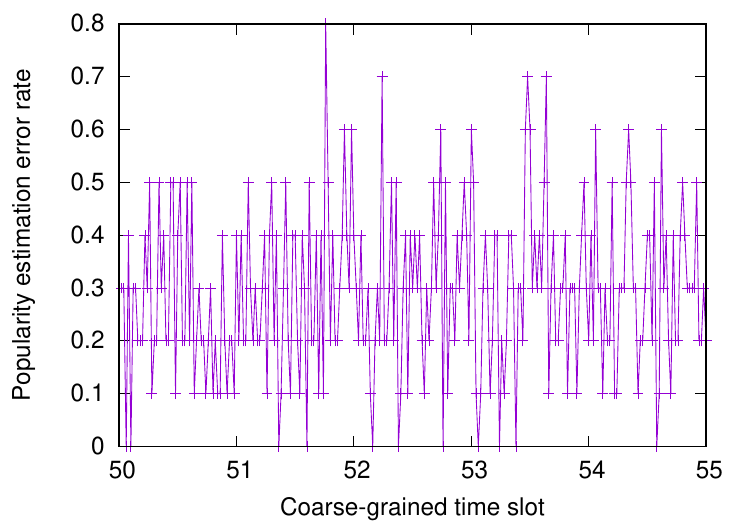}}
\caption{The data popularity estimation error}
\label{fig_exp2_estimation_error}
\end{figure}
In this experiment, we fix the request-arrival rate at $25$ per fine-grained time slot to focus on the impact of data-popularity dynamics. We set the transportation cost $L = 35,000$ and the drift-plus-penalty related parameter $V = 100,000$. Like the earlier experiment, we compare the time-average revenues in Fig. \ref{fig_exp2_revenue_sequence}, the transportation costs in Fig. \ref{fig_exp2_cost_sequence}, and the virtual queue lengths in Fig. \ref{fig_exp2_queue}. 
\begin{figure}[htbp]
\centerline{\includegraphics[scale=0.7,bb=0 0 216 151]{./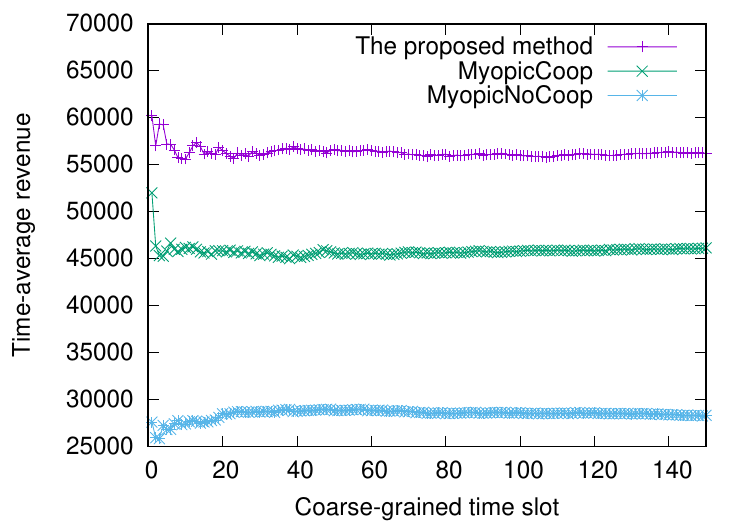}}
\caption{Time-average revenues of myopic methods with dynamic data popularity}
\label{fig_exp2_revenue_sequence}
\end{figure}
Fig. \ref{fig_exp2_cost_sequence} shows the transportation costs of the three methods. 
\begin{figure}[htbp]
\centerline{\includegraphics[scale=0.7,bb=0 0 216 151]{./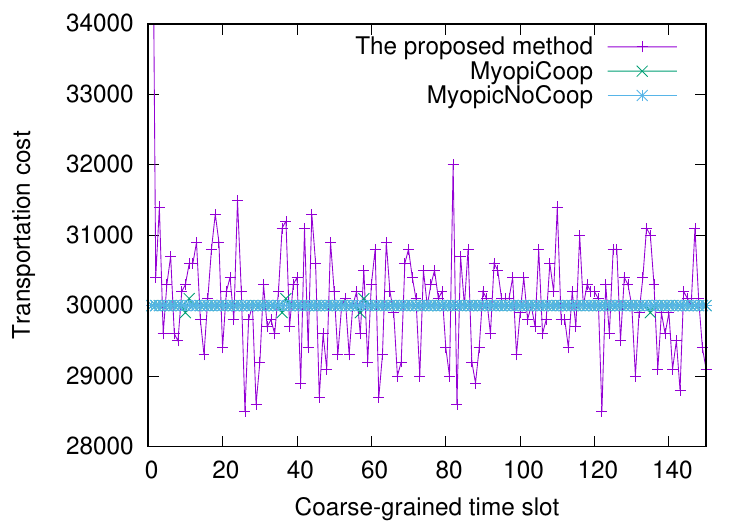}}
\caption{Transportation costs with dynamic data popularity}
\label{fig_exp2_cost_sequence}
\end{figure}
\begin{figure}[htbp]
\centerline{\includegraphics[scale=0.7,bb=0 0 216 151]{./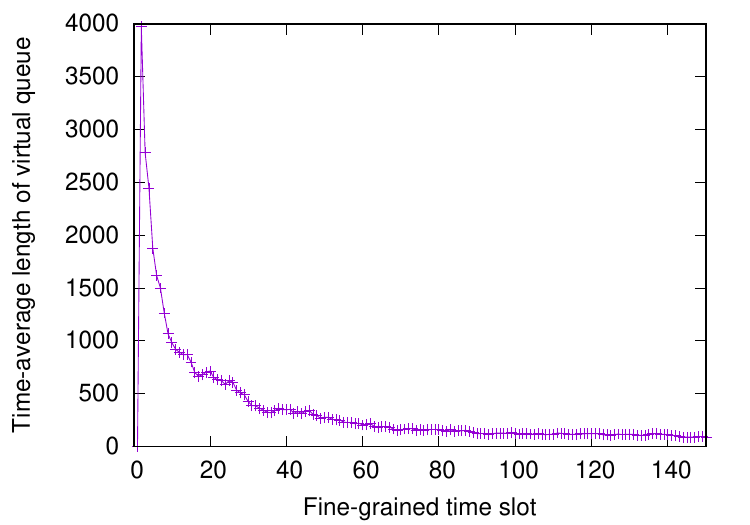}}
\caption{Time-average virtual-queue lengths}
\label{fig_exp2_queue}
\end{figure}
In Fig. \ref{fig_exp2_revenue_sequence}, we can see that the proposed method has the best performance. Fig. \ref{fig_exp2_cost_sequence} shows the reason: The proposed method takes better advantage of the time-average constraint expression, while the myopic methods cannot. Fig. \ref{fig_exp2_queue} shows that the length of the virtual queue approaches $0$ over time, implying the stability of the system. 

\textbf{Evaluation of the impact of total cache size.}
The cache size deeply influences the system. Larger cache sizes allow more data to be cached, therefore incurring smaller transportation costs. Mathematically, larger cache sizes produce larger feasible regions for the optimization problem. In this experiment, we set the ratio of the total cache size to the universal data size to be $0.1$, $0.5$, and $0.9$, the time-average transportation constraint $L = 35000$, and the drift-plus-penalty related parameter $V = 100,000$, and we carry out the simulation for $150$ coarse-grained time slots. Fig. \ref{fig_exp3_revenue_vs_cache}, \ref{fig_exp3_cost_vs_cache}, and \ref{fig_exp3_queue} show the time-average revenues, the transportation costs, and the time-average virtual queue lengths, respectively. 
\begin{figure}[htbp]
\centerline{\includegraphics[scale=0.7,bb=0 0 216 151]{./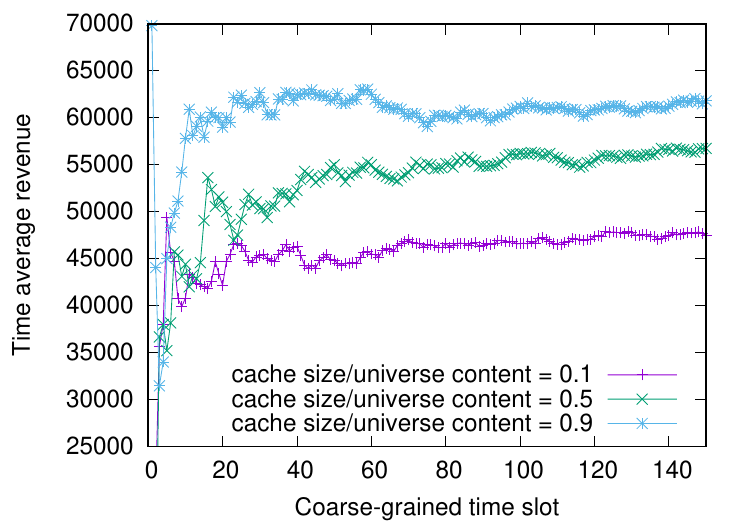}}
\caption{Time-average revenues with different cache sizes}
\label{fig_exp3_revenue_vs_cache}
\end{figure}
\begin{figure}[htbp]
\centerline{\includegraphics[scale=0.7,bb=0 0 216 151]{./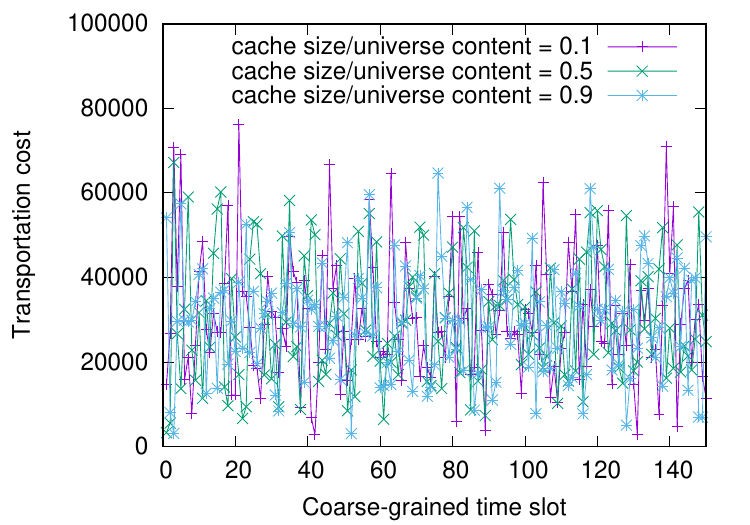}}
\caption{Transportation costs with different cache sizes}
\label{fig_exp3_cost_vs_cache}
\end{figure}
\begin{figure}[htbp]
\centerline{\includegraphics[scale=0.7,bb=0 0 216 151]{./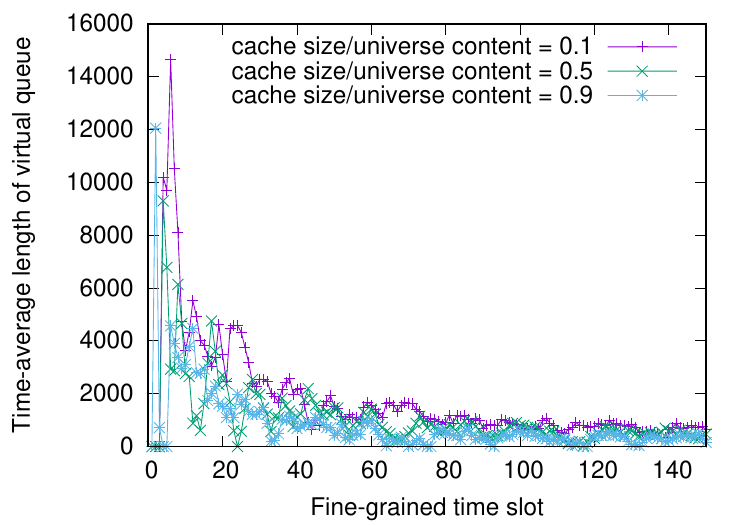}}
\caption{Time-average virtual-queue lengths}
\label{fig_exp3_queue}
\end{figure}
Fig. \ref{fig_exp3_revenue_vs_cache} intuitively shows the impact of cache size on time-average revenue. Fig. \ref{fig_exp3_cost_vs_cache} shows that the time-average transportation cost is satisfied for all three cache sizes, and Fig. \ref{fig_exp3_queue} implies the stability of the system. 

\textbf{Evaluation of the impact of data sources.}
Besides the cache size, the ratio of the private-data volume to the public-data volume is also a key factor in the performance of the system. This ratio reflects the composition of the edge applications. An example of private data is a video uploaded from a mobile device for further processing; such private data cannot be shared with others, so caching them has no value. An example of public data is a short video from over-the-top (OTT) providers.
A large portion of public data increases the usage of the caching device. In this simulation, we set the ratios of private to public data volume at $0.5$, $2.0$, and $3.5$, the time-average transportation constraint $L = 35,000$, and the drift-plus-penalty related parameter $V = 100,000$, and we carry out the simulation for $150$ coarse-grained time slots. Fig. \ref{fig_exp4_revenue_vs_privatePub}, \ref{fig_exp4_cost_vs_privatePub}, and \ref{fig_exp4_queue} show the time-average revenues, the transportation costs, and the time-average virtual queue lengths. 
\begin{figure}[htbp]
\centerline{\includegraphics[scale=0.7,bb=0 0 216 151]{./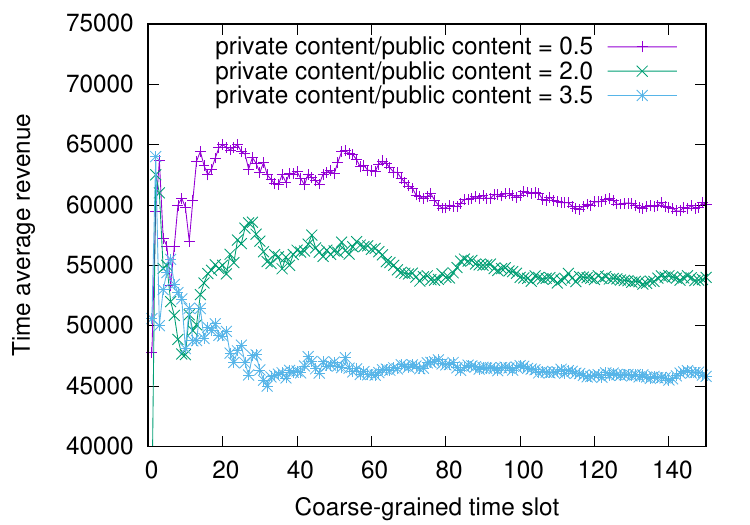}}
\caption{Time-average revenues with different data sources}
\label{fig_exp4_revenue_vs_privatePub}
\end{figure}
\begin{figure}[htbp]
\centerline{\includegraphics[scale=0.7,bb=0 0 216 151]{./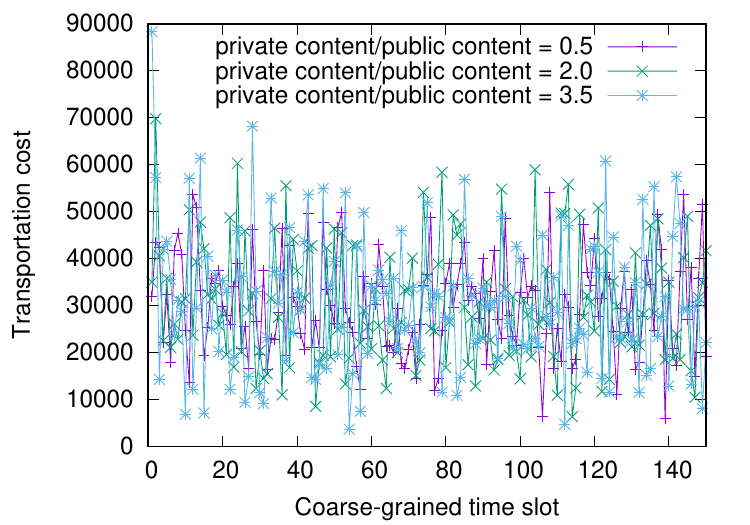}}
\caption{Transportation costs with different data sources}
\label{fig_exp4_cost_vs_privatePub}
\end{figure}
\begin{figure}[htbp]
\centerline{\includegraphics[scale=0.7,bb=0 0 216 151]{./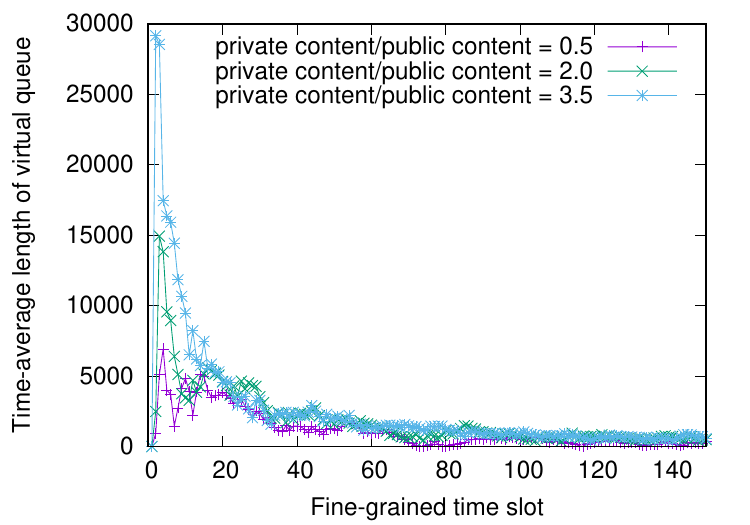}}
\caption{Time-average virtual-queue lengths}
\label{fig_exp4_queue}
\end{figure}
The results are similar to the results with difference cache sizes because more public data implies that more data can be cached, thereby increasing the feasible region of the optimization problem. Fig. \ref{fig_exp4_cost_vs_privatePub} and \ref{fig_exp4_queue} show that the proposed method is feasible and stable with different data compositions. 

\textbf{Comparison with the $N$-look-ahead algorithm}
Remember that we have proved that our method achieves competitive optimal solutions, which means that the performance gap between the proposed method and the theoretically optimal method is fixed. Because it needs vast computing resources to obtain the theoretically optimal results, in this experiment, we implemented an $N$-slot look ahead algorithm to approximate the optimal solution. The algorithm assumes that the future with $N$ coarse-grained time slots can be perfectly predicted and then solves the following optimization problem. Clearly, if $N$ approaches infinity, the $N$-look-ahead algorithm becomes the theoretically optimal solution. Fig. \ref{fig_exp5_revenue_sequence}, \ref{fig_exp5_cost_sequence}, and \ref{fig_exp5_queue} show the comparison results with $N = 5$.
\begin{figure}[htbp]
\centerline{\includegraphics[scale=0.7,bb=0 0 216 151]{./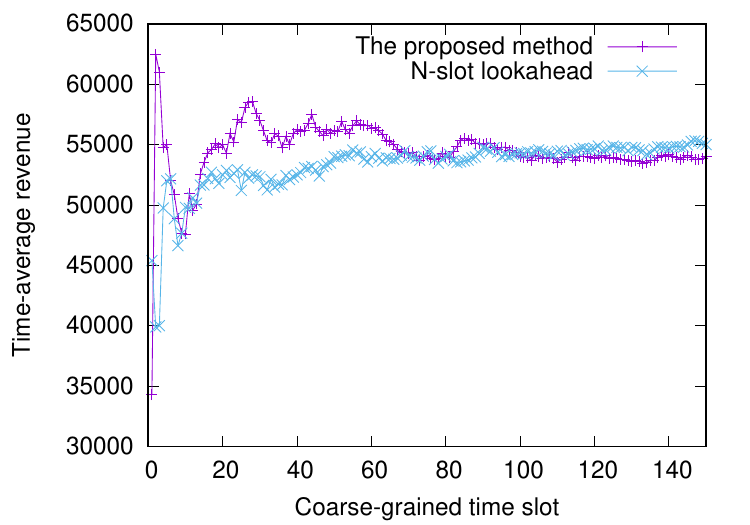}}
\caption{Time-average revenue comparison with two near-optimal methods}
\label{fig_exp5_revenue_sequence}
\end{figure}
\begin{figure}[htbp]
\centerline{\includegraphics[scale=0.7,bb=0 0 216 151]{./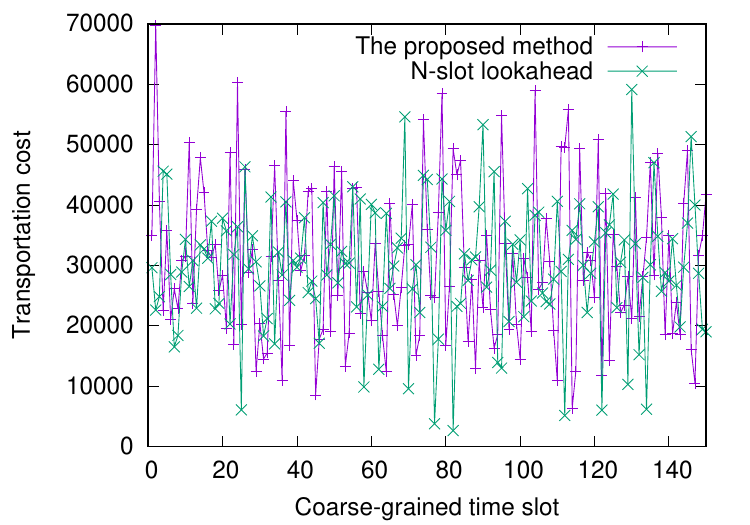}}
\caption{Transportation cost comparison with $N$-slot look ahead algorithm} 
\label{fig_exp5_cost_sequence}
\end{figure}
\begin{figure}[htbp]
\centerline{\includegraphics[scale=0.7,bb=0 0 216 151]{./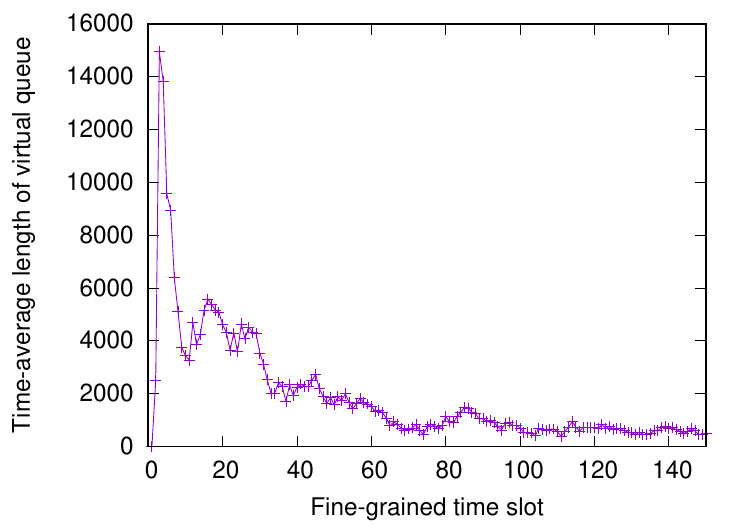}}
\caption{Time-average virtual-queue lengths}
\label{fig_exp5_queue}
\end{figure}

\textbf{Evaluation of the impact of $V$.}
In our algorithm, there is only one parameter, the drift-plus-penalty parameter $V$, to trade off between revenue and cost. In this experiment, we vary the time-average cost constraint $L$ from $30,000$ to $36,000$ and $V$ from $60,000$ to $180,000$ and carried out the simulation $1,000$ times, each simulation running for $150$ coarse-grained time slots. We show the impact of $V$ in Fig. \ref{fig_exp6_revenue}, \ref{fig_exp6_cost}, and \ref{fig_exp6_accept}. 
\begin{figure}[htbp]
\centerline{\includegraphics[scale=0.7,bb=0 0 216 151]{./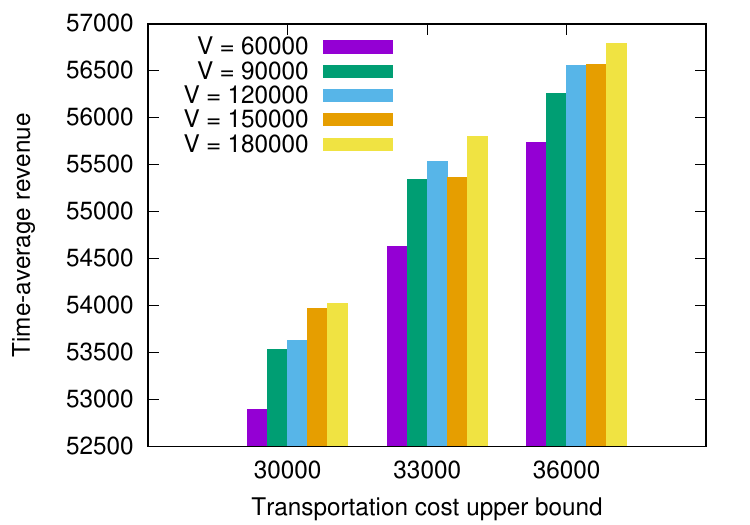}}
\caption{Time-average revenues with different $L$ and $V$}
\label{fig_exp6_revenue}
\end{figure}
\begin{figure}[htbp]
\centerline{\includegraphics[scale=0.7,bb=0 0 216 151]{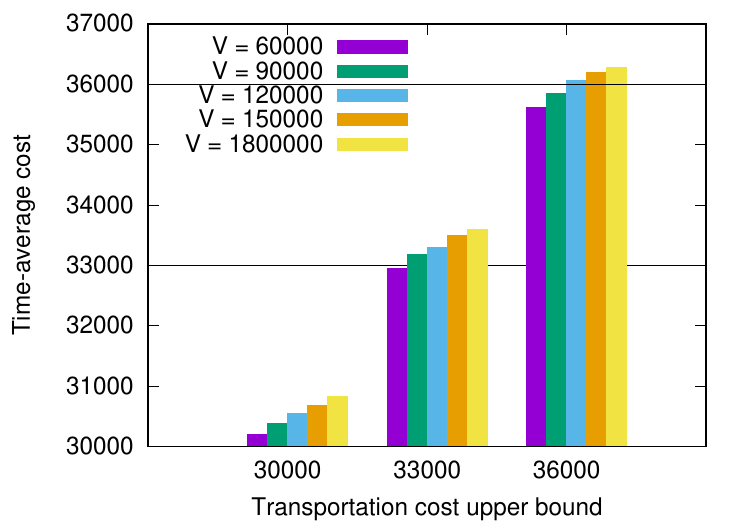}}
\caption{Time-average transportation costs with different $L$ and $V$}
\label{fig_exp6_cost}
\end{figure}
\begin{figure}[htbp]
\centerline{\includegraphics[scale=0.7,bb=0 0 216 151]{./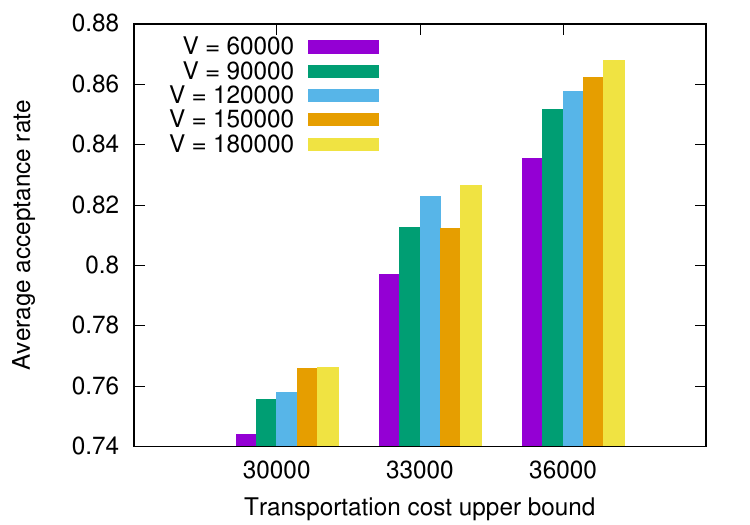}}
\caption{Acceptance rates with different $L$ and $V$}
\label{fig_exp6_accept}
\end{figure}
Clearly, $V$ controls the trade-off between revenue and the violation of the data-transportation cost constraint in Fig. \ref{fig_exp6_revenue} and \ref{fig_exp6_cost}. Given a certain expectation of the long-term data-transportation cost constraint $\mathcal{C}$, a smaller $V$ would help maintain the constraint while a larger $V$ would violate it. A larger $V$, however, usually brings more revenue to the ECP than a smaller one, primarily from the request acceptance rate visible in Fig. \ref{fig_exp6_accept}. A larger $V$ usually leads to a higher acceptance rate than a smaller $V$ given the same $\mathcal{C}$. This fact is important for the ECP to determine both $\mathcal{C}$ and $V$ in practice to make a better tradeoff between revenue and data-transportation cost.
\section{Related research}
\label{sec_relatedworks}
In recent years, edge computing has increasingly attracted interest 
In \cite{Tao2017}, the authors investigated the energy efficient computation offloading problem, and presented resource allocation and task offloading schemes with performance guarantee. In \cite{Yuanyuan2019}, the authors also considered the energy saving issues, and presented a cooperative fog computing system to process offloading workload on the entire fog layer nodes. They then defined the problem as a joint optimization problem of QoE and energy, and provided optimal solution with low overhead. In \cite{Peng2020}, the authors discussed the economic issues in large-scale edge computing systems, and proposed a multiattribute-based double auction mechanism in vehicular fog computing to provide efficient and fair resource allocation to vehicular users. In \cite{Hong2019}, the authors studied the computation-offloading and data routing problems to minimize the computing and energy consumption. They presented a multi-hop cooperative-messaging mechanism that can a stable performance gain for IoT systems. In addition to computing resource allocation, data caching is also a hot topic in edge computing. In \cite{Xu2019}, the authors studied energy-efficient proactive in-network caching for edge computing, and presented a hybrid edge caching scheme considering the popularity of cached files. The proposed method was proved to be energy efficient with good QoS. In \cite{Yuanyuan2020}, the authors studied the unique problem of caching fairness in edge computing systems, and presented an approximation algorithm that can significantly improve the data caching fairness. In \cite{Qu2020}, the authors investigated the multiple bitrate video caching problem, and proved the NP-hardness of the problem. They then presented a $1/2$-competitive algorithm with low complexity. Our work is different from most existing research. This article is the first to introduce the general-purpose edge computing, and considers joint optimization of computing resource allocation and data placement in an online manner with provable performance. The primary idea was presented in our previous work \cite{Shao20192}, but is extensively extended in this article. 

In terms of methodology, a great challenge in the resource allocation of edge computing is the high dynamic. As a robust method with low complexity, Lyapunov optimization-based drift-plus-penalty algorithms have attracted interest \cite{Neely2010}. Because this method can obtain near-optimal performance without any knowledge of the future, the vanilla algorithm and its variations are widely used in dynamic systems \cite{Yeh2015} \cite{Qiu2015} \cite{Mao2016} \cite{Li2018}. In \cite{Yao2014} and \cite{Deng2013}, the authors presented applications with two time scales, widely extending the methods usage. However, most Lyapunov optimizationâ-based methods need buffering the 2017 request and therefore cannot satisfy real-time requirements. In \cite{Buchbinder2009}, the authors proposed a proximate-optimization algorithm that makes online decisions, and the authors of \cite{Hao2016} employed the algorithm for resource allocation in distributed clouds. These algorithms are sensitive to real-time constraint violations and so cannot make trade-offs among successive time slots. In our work, we substantially extend existing online algorithms to enable arbitrary time scales for the joint optimization framework in edge computing. 

\section{Conclusions}
\label{sec_conclusions}
Edge computing is increasingly important in the cloud and mobile computing era. By pushing services and data to the edge, network providers release pressure from their core networks, mobile users benefit from enhanced performance, and service providers have more opportunities to develop emerging services and applications. However, nowadays, a majority of research is focusing on application-specific edge computing systems, which can not take the advantages of edge computing sufficiently. In this work, we present the vision of general-purpose edge computing, and analyze the applications and the ecosystem in general-purpose edge computing environment. We identify the main challenges to realize efficient general-purpose edge computing: interaction of resource allocation and data placement with dynamic requests in heterogenous environment, and then propose a novel online framework and algorithms that can obtain an approximation to the optimality with constant gap without any assumptions and knowledge of the future system states. Both theoretical analysis and simulation results show that with the proposed method, the edge cloud provider can receive near-optimal benefit without assumptions and system future knowledge.

\section*{Acknowledgments}
The research is partly supported by the ROIS National Institute of Informatics Open Collaborative Research 2021-21FA03, the Telecommunications Advancement Foundation, Japan, the Cooperative Research Project Program of the Research Institute of Electrical Communication, Tohoku University, and JSPS KAKENHI with the grant number 20H04174.

\bibliographystyle{IEEEtran}
\bibliography{IEEEabrv,./bunken}

\begin{thebibliography}{10}
\providecommand{\url}[1]{#1}
\csname url@samestyle\endcsname
\providecommand{\newblock}{\relax}
\providecommand{\bibinfo}[2]{#2}
\providecommand{\BIBentrySTDinterwordspacing}{\spaceskip=0pt\relax}
\providecommand{\BIBentryALTinterwordstretchfactor}{4}
\providecommand{\BIBentryALTinterwordspacing}{\spaceskip=\fontdimen2\font plus
\BIBentryALTinterwordstretchfactor\fontdimen3\font minus
  \fontdimen4\font\relax}
\providecommand{\BIBforeignlanguage}[2]{{%
\expandafter\ifx\csname l@#1\endcsname\relax
\typeout{** WARNING: IEEEtran.bst: No hyphenation pattern has been}%
\typeout{** loaded for the language `#1'. Using the pattern for}%
\typeout{** the default language instead.}%
\else
\language=\csname l@#1\endcsname
\fi
#2}}
\providecommand{\BIBdecl}{\relax}
\BIBdecl

\bibitem{Wei2016}
W.~{Cai}, R.~{Shea}, C.~{Huang}, K.~{Chen}, J.~{Liu}, V.~C.~M. {Leung}, and
  C.~{Hsu}, ``A survey on cloud gaming: Future of computer games,'' \emph{IEEE
  Access}, vol.~4, pp. 7605--7620, 2016.

\bibitem{Wei2018}
W.~{Cai}, F.~{Chi}, X.~{Wang}, and V.~C.~M. {Leung}, ``Toward multiplayer
  cooperative cloud gaming,'' \emph{IEEE Cloud Computing}, vol.~5, no.~5, pp.
  70--80, 2018.

\bibitem{Wei20182}
W.~{Cai}, Y.~{Chi}, C.~{Zhou}, C.~{Zhu}, and V.~C.~M. {Leung}, ``Ubcgaming:
  Ubiquitous cloud gaming system,'' \emph{IEEE Systems Journal}, vol.~12,
  no.~3, pp. 2483--2494, 2018.

\bibitem{Neely2010}
M.~J. Neely, \emph{Stochastic Network Optimization with Application to
  Communication and Queueing Systems}.\hskip 1em plus 0.5em minus 0.4em\relax
  Morgan and Claypool Publishers, 2010.

\bibitem{Calinescu2011}
G.~Calinescu, C.~Chekuri, M.~Pal, and J.~Vondrak, ``{Maximizing a Monotone
  Submodular Function Subject to a Matroid Constraint},'' \emph{{SIAM Journal
  on Computing}}, vol.~40, no.~6, pp. 1740--1766, 2011.

\bibitem{Shao2019}
X.~Shao, Z.~Liu, M.~Dong, H.~Masui, and Y.~Ji, ``{A Competitive Approximation
  Algorithm for Data Allocation Problem in Heterogeneous Mobile Edge
  Computing},'' in \emph{{Proc. IEEE VTC 2019 Spring Wks}}, May 2019, pp. 1--5.

\bibitem{Tran2019-1}
T.~X. Tran and D.~Pompili, ``{Adaptive Bitrate Video Caching and Processing in
  Mobile-Edge Computing Networks},'' \emph{{IEEE Transactions on Mobile
  Computing}}, vol.~18, pp. 1965--1978, Sept. 2019.

\bibitem{Tao2017}
X.~{Tao}, K.~{Ota}, M.~{Dong}, H.~{Qi}, and K.~{Li}, ``Performance guaranteed
  computation offloading for mobile-edge cloud computing,'' \emph{IEEE Wireless
  Communications Letters}, vol.~6, no.~6, pp. 774--777, 2017.

\bibitem{Yuanyuan2019}
Y.~{Dong}, S.~{Guo}, J.~{Liu}, and Y.~{Yang}, ``Energy-efficient fair
  cooperation fog computing in mobile edge networks for smart city,''
  \emph{IEEE Internet of Things Journal}, vol.~6, no.~5, pp. 7543--7554, 2019.

\bibitem{Peng2020}
X.~{Peng}, K.~{Ota}, and M.~{Dong}, ``Multiattribute-based double auction
  toward resource allocation in vehicular fog computing,'' \emph{IEEE Internet
  of Things Journal}, vol.~7, no.~4, pp. 3094--3103, 2020.

\bibitem{Hong2019}
Z.~{Hong}, W.~{Chen}, H.~{Huang}, S.~{Guo}, and Z.~{Zheng}, ``Multi-hop
  cooperative computation offloading for industrial iot–edge–cloud
  computing environments,'' \emph{IEEE Transactions on Parallel and Distributed
  Systems}, vol.~30, no.~12, pp. 2759--2774, 2019.

\bibitem{Xu2019}
J.~Xu, K.~Ota, and M.~Dong, ``{Energy Efficient Hybrid Edge Caching Scheme for
  Tactile Internet in 5G},'' \emph{{IEEE Transactions on Green Communications
  and Networking}}, vol.~3, no.~2, pp. 483--493, Jun. 2019.

\bibitem{Yuanyuan2020}
Y.~{Huang}, X.~{Song}, F.~{Ye}, Y.~{Yang}, and X.~{Li}, ``Fair and efficient
  caching algorithms and strategies for peer data sharing in pervasive edge
  computing environments,'' \emph{IEEE Transactions on Mobile Computing},
  vol.~19, no.~4, pp. 852--864, 2020.

\bibitem{Qu2020}
Z.~{Qu}, B.~{Ye}, B.~{Tang}, S.~{Guo}, S.~{Lu}, and W.~{Zhuang}, ``Cooperative
  caching for multiple bitrate videos in small cell edges,'' \emph{IEEE
  Transactions on Mobile Computing}, vol.~19, no.~2, pp. 288--299, 2020.

\bibitem{Shao20192}
X.~{Shao}, G.~{Hasegawa}, N.~{Kamiyama}, Z.~{Liu}, H.~{Masui}, and Y.~{Ji},
  ``Joint optimization of computing resources and data allocation for mobile
  edge computing (mec): An online approach,'' in \emph{2019 28th International
  Conference on Computer Communication and Networks (ICCCN)}, 2019, pp. 1--9.

\bibitem{Yeh2015}
E.~Yeh, T.~Ho, Y.~Cui, R.~Liu, M.~Burd, and D.~Leong, ``{VIP: A Framework for
  Joint Dynamic Forwarding and Caching in Named Data Networks},'' in
  \emph{{Proc. ACM ICN 2014}}, Sep. 2014, pp. 117--126.

\bibitem{Qiu2015}
X.~Qiu, H.~Li, C.~Wu, Z.~Li, and F.~C.~M. Lau, ``{Cost-Minimizing Dynamic
  Migration of Content Distribution Services into Hybrid Clouds},'' \emph{{IEEE
  Transactions on Parallel and Distributed Systems}}, vol.~26, no.~12, pp.
  3330--3345, Dec. 2015.

\bibitem{Mao2016}
Y.~Mao, J.~Zhang, and K.~B. Letaief, ``{Dynamic Computation Offloading for
  Mobile-Edge Computing with Energy Harvesting Devices},'' \emph{{IEEE Journal
  on Selected Areas in Communications}}, vol.~34, no.~12, pp. 3590--3605, Dec.
  2016.

\bibitem{Li2018}
W.~Li, X.~Zhou, K.~Li, H.~Qi, and D.~Guo, ``{TrafficShaper: Shaping
  Inter-Datacenter Traffic to Reduce the Transmission Cost},'' \emph{{IEEE
  Transactions on Networking}}, vol.~26, no.~3, pp. 1193--1206, Jun. 2018.

\bibitem{Yao2014}
Y.~Yao, L.~Huang, A.~B. Sharma, L.~Golubchik, and M.~J. Neely, ``{Power Cost
  Reduction in Distributed Data Centers: A Two-Time-Scale Approach for Delay
  Tolerant Workloads},'' \emph{{IEEE Transactions on Parallel and Distributed
  Systems}}, vol.~25, no.~1, pp. 200--211, Jan. 2014.

\bibitem{Deng2013}
W.~Deng, F.~Liu, H.~Jin, and C.~Wu, ``{SmartDPSS: Cost-Minimizing Multi-Source
  Power Supply for Datacenters with Arbitrary Demand},'' in \emph{{Proc. IEEE
  ICDCS 2013}}, Jul. 2013, pp. 1--13.

\bibitem{Buchbinder2009}
N.~Buchbinder and J.~S. Naor, ``{Online Primal-Dual Algorithms for Covering and
  Packing},'' \emph{{Mathematics of Operations Research}}, vol.~34, no.~2, Apr.
  2009.

\bibitem{Hao2016}
F.~Hao, M.~Kodialam, T.~V. Lakshman, and S.~Mukherjee, ``{Online Allocation of
  Virtual Machines in a Distributed Cloud},'' \emph{{IEEE Transactions on
  Networking}}, vol.~25, no.~1, pp. 238--249, Jul. 2016.

\end{thebibliography}

\begin{IEEEbiography}
    [{\includegraphics[width=1in,height=1.25in,clip,keepaspectratio,scale=1,bb=0 0 433 540]{./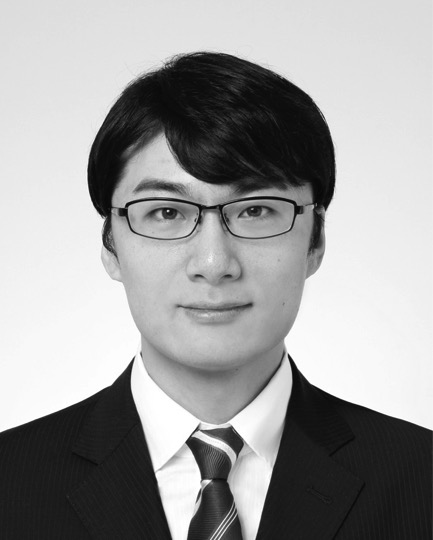}}]{Xun Shao}
received his Ph.D. in information science from the Graduate School of Information Science and Technology, Osaka University, Japan, in 2013. From 2013 to 2017, he was a researcher with the National Institute of Information and Communications Technology (NICT) in Japan. Currently, he is an Assistant Professor at the School of Regional Innovation and Social Design Engineering, Kitami Institute of Technology, Japan. His research interests include distributed systems and networking. He is a member of the IEEE and IEICE.
\end{IEEEbiography}

\begin{IEEEbiography}
    [{\includegraphics[width=1in,height=1.25in,clip,keepaspectratio,scale=1,bb=0 0 448 595]{./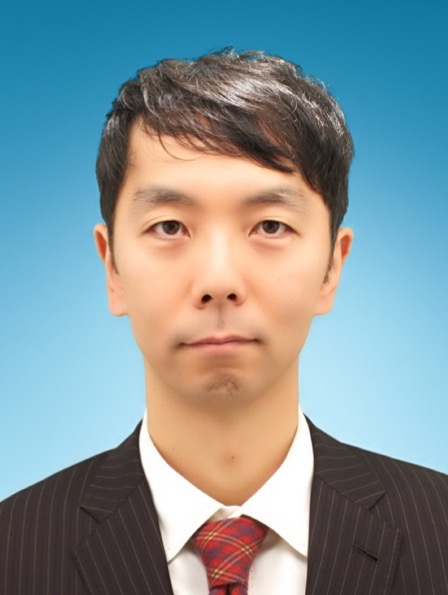}}]{Go Hasegawa}
received his M.E. and D.E. in information and computer sciences from Osaka University, Japan, in 1997 and 2000, respectively. From July 1997 to June 2000, he was a Research Assistant at the Graduate School of Economics, Osaka University. From 2000 to 2018, he was an Associate Professor at the Cybermedia Center, Osaka University. He is now a professor at the Research Institute of Electrical Communication, Tohoku University. His research work is in information network architecture. He is a member of the IEEE and IEICE.\end{IEEEbiography}

\begin{IEEEbiography}
    [{\includegraphics[width=1in,height=1.25in,clip,keepaspectratio,scale=1,bb=0 0 502 600]{./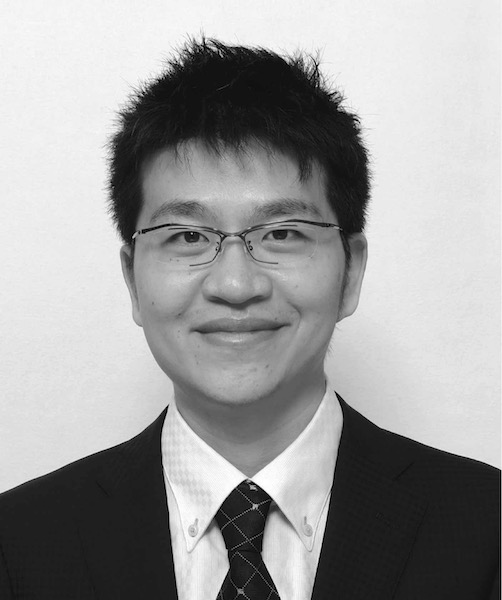}}]{Mianxiong Dong}
Mianxiong Dong received B.S., M.S. and Ph.D. in Computer Science and Engineering from The University of Aizu, Japan. He is Vice President and the youngest ever Professor of Muroran Institute of Technology, Japan. He was a JSPS Research Fellow with School of Computer Science and Engineering, The University of Aizu, Japan and was a visiting scholar with BBCR group at the University of Waterloo, Canada supported by JSPS Excellent Young Researcher Overseas Visit Program from April 2010 to August 2011. Dr. Dong was selected as a Foreigner Research Fellow (a total of 3 recipients all over Japan) by NEC C\&C Foundation in 2011. He is the recipient of IEEE TCSC Early Career Award 2016, IEEE SCSTC Outstanding Young Researcher Award 2017, The 12th IEEE ComSoc Asia-Pacific Young Researcher Award 2017, Funai Research Award 2018 and NISTEP Researcher 2018 (one of only 11 people in Japan) in recognition of significant contributions in science and technology. He is Clarivate Analytics 2019 Highly Cited Researcher (Web of Science) and Foreign Fellow of EAJ.
\end{IEEEbiography}


\begin{IEEEbiography}
    [{\includegraphics[width=1in,height=1.25in,clip,keepaspectratio,scale=1,bb=0 0 408 545]{./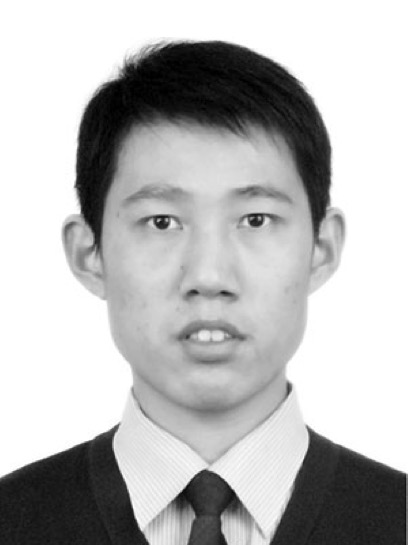}}]{Zhi Liu}
received his B.E. from the University of Science and Technology of China and his Ph.D. in informatics from the National Institute of Informatics, Japan. He is currently an Assistant Professor at Shizuoka University. He was a Junior Researcher (Assistant Professor) at Waseda University and a JSPS research fellow at the National Institute of Informatics. His research interests include video network transmission, vehicular networks, and mobile edge computing. He is a member of the IEEE and IEICE.
\end{IEEEbiography}

\begin{IEEEbiography}
    [{\includegraphics[width=1in,height=1.25in,clip,keepaspectratio,scale=1,bb=0 0 284 314]{./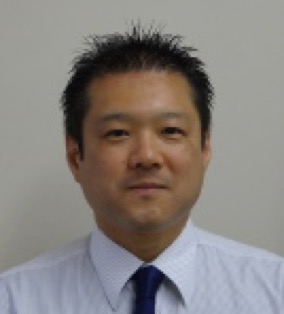}}]{Hiroshi Masui}
received his Ph.D. in science from the Graduate School of Osaka University, Japan, in 1998. From 1998 to 2001, he was a researcher with Hokkaido University in Japan, and from 2002 to 2004, a JSPS researcher. Currently, he is a Professor at the School of Regional Innovation and Social Design Engineering and the Director of the Information Processing Center, Kitami Institute of Technology. His research interests include data science and distributed systems. He is a member of the IPS Japan.
\end{IEEEbiography}

\begin{IEEEbiography}
    [{\includegraphics[width=1in,height=1.25in,clip,keepaspectratio,scale=1,bb=0 0 445 545]{./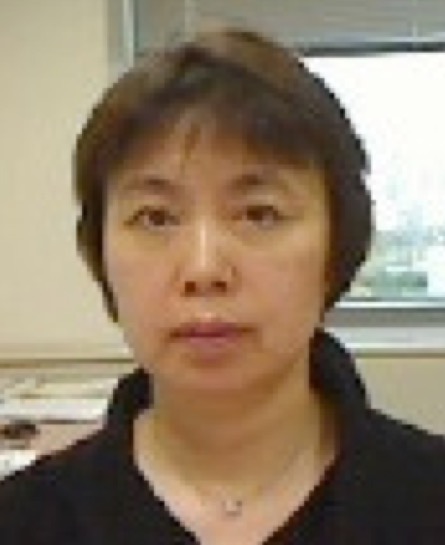}}]{Yusheng Ji}
received her B.E., M.E., and D.E. in electrical engineering from the University of Tokyo. She joined the National Center for Science Information Systems, Japan (NACSIS) in 1990. Currently, she is a Professor at the National Institute of Informatics (NII) and SOKENDAI (the Graduate University for Advanced Studies). Her research interests include network architecture, resource management, and quality-of-service provisioning in wired and wireless communication networks. She is/has been an Editor of the IEEE TVT, Associate Editor of IEICE Transactions and IPSJ Journal, Guest Editor-in-Chief, Guest Editor, and Guest Associate Editor of Special Issues of the IEICE Transactions and IPSJ Journal, and a Symposium Co-chair of IEEE GLOBECOM 2012, 2014, Track Chair of IEEE VTC 2016 Fall, 2017 Fall, and a TPC member of IEEE INFOCOM, ICC, GLOBECOM, WCNC, VTC, etc. She is IEEE Fellow from 2021.
\end{IEEEbiography}

\end{document}